\documentclass[11pt]{article}




\title{Distributed Verification and Hardness of Distributed Approximation\thanks{The preliminary version of this paper appeared as \cite{DasSarmaHKKNPPW11} in the Proceeding of the 43rd ACM Symposium on Theory of Computing,
               (STOC) 2011.}}



\author{Atish {Das Sarma}\thanks{Google Research, Google Inc., Mountain View, USA. \hbox{E-mail}:~{\tt dassarma@google.com}.} \thanks{Part of the work done while at Georgia Institute of Technology.} 
\and Stephan Holzer\thanks{Computer Engineering and Networks Laboratory (TIK), ETH Zurich, CH-8092 Zurich, Switzerland. E-mail: {\tt \{stholzer, wattenhofer\}@tik.ee.ethz.ch.}} \and Liah Kor\thanks{Department of Computer Science and Applied Mathematics, The Weizmann
Institute of Science, Rehovot, 76100 Israel.
E-mail: {\tt \{liah.kor,david.peleg\}@weizmann.ac.il}.
Supported by a grant from the United States-Israel Binational Science
Foundation (BSF).}
\and Amos Korman\thanks{CNRS and LIAFA, Univ. Paris 7, Paris, France.
E-mail: {\tt amos.korman@liafa.jussieu.fr}. Supported by the ANR projects ALADDIN and PROSE and by the INRIA project GANG.}
\thanks{Supported by a France-Israel cooperation grant
(``Mutli-Computing'' project) from the France Ministry of Science and Israel Ministry of Science.} \and Danupon Nanongkai\thanks{Faculty of Computer Science, The University of Vienna, Vienna, Austria.
\hbox{E-mail}:~{\tt danupon@gmail.com}}\addtocounter{footnote}{-6}~~\footnotemark \addtocounter{footnote}{5}
\and Gopal Pandurangan\thanks{Division of Mathematical
Sciences, Nanyang Technological University, Singapore 637371 \& Department of Computer Science, Brown University, Providence, RI 02912, USA.  \hbox{E-mail}:~{\tt gopalpandurangan@gmail.com}. Supported in part by the following grants: Nanyang Technological University grant M58110000, US NSF grant CCF-1023166, and a grant from the US-Israeli Binational Science Foundation (BSF).} \and David Peleg\addtocounter{footnote}{-5}\footnotemark\addtocounter{footnote}{4}~~\addtocounter{footnote}{-3}\footnotemark\addtocounter{footnote}{2} \and Roger Wattenhofer\addtocounter{footnote}{-6}\footnotemark
}

\usepackage[latin1]{inputenc}
\usepackage{tikz}
\usetikzlibrary{shapes,arrows,positioning,shadows,snakes}

\usepackage{psfrag}

\usepackage{subfigure}
\usepackage{amsthm,amsmath,amsfonts}
\usepackage{color}
\usepackage{amsfonts}
\usepackage{wrapfig}
\usepackage{fullpage}




\def\figspace{0}
\newcommand{\graph}{G(\Gamma, d, p)}
\newcommand{\ElkinGraph}{\graph}

\def\EQfunc{\mbox{\tt eq}}

\def\DISJfunc{\mbox{\tt disj}}

\def\cP{\mathcal{P}}
\def\cH{\mathcal{H}}
\def\cA{\mathcal{A}}
\def\cT{\mathcal{T}}

\def\indVarS{X^s}
\def\indVarR{X^r}

\newtheorem{theorem}{Theorem}[section]
\newtheorem{corollary}[theorem]{Corollary}

\newtheorem{lemma}[theorem]{Lemma}
\newtheorem{claim}[theorem]{Claim}

\theoremstyle{definition}
\theoremstyle{observation}\newtheorem{observation}[theorem]{Observation}



\newcommand{\squishlist}{\begin{itemize}}
\newcommand{\squishend}{\end{itemize}}


%
%

\def\danupon#1{}
\def\stephan#1{}
\def\atish#1{}

\begin{document}

\maketitle

\begin{abstract}
We study the {\em verification} problem in distributed  networks, stated as follows. Let $H$ be a subgraph of a network $G$ where each vertex of $G$ knows which edges incident on it are in $H$. We would like to verify whether $H$ has some properties, e.g., if it is a tree or if it is connected (every node knows at the end of the process whether $H$ has the specified property or not). We would like to perform this verification in  a decentralized fashion via a distributed algorithm. The time complexity of verification is measured as the number of rounds of distributed communication.

In this paper we initiate a systematic  study  of  distributed verification, and give almost tight lower bounds on the running time of distributed verification algorithms for many fundamental problems such as connectivity, spanning connected subgraph, and $s-t$ cut verification. We then show applications of these results in deriving strong unconditional time lower bounds on  the {\em hardness of distributed approximation} for many classical optimization problems including minimum spanning tree, shortest paths, and minimum cut. Many of these results are the first non-trivial lower bounds for both exact and approximate distributed computation and they resolve  previous open questions. Moreover, our unconditional lower  bound of approximating minimum spanning tree (MST)  subsumes and  improves  upon  the previous hardness of approximation bound  of Elkin [STOC 2004] as well as the lower bound for  (exact) MST computation  of Peleg and Rubinovich [FOCS 1999]. Our  result implies that there can be no distributed approximation algorithm for MST that is significantly faster than the current exact algorithm, for {\em any} approximation factor.

Our lower bound proofs show an interesting connection between communication complexity and distributed computing which turns out to be  useful in establishing  the time complexity of exact and approximate distributed computation of many  problems.
\end{abstract}

%



\section{Introduction}\label{sec:intro}

Large and complex networks, such as the human society, the Internet, or the brain, are being studied intensely by different branches of science. Each individual node in such a network can directly communicate only with its   neighboring nodes.
Despite being restricted to such \emph{local} communication, the network itself should work towards a \emph{global} goal, i.e., it should organize itself, or deliver a service.

In this work we investigate the possibilities and limitations of distributed/decen-tralized computation, i.e., to what degree local information is sufficient to solve global tasks. Many tasks can be solved entirely via local communication, for instance, how many friends of friends one has.
Research in the last 30 years has shown that  some classic combinatorial optimization problems such as matching, coloring, dominating set, or approximations thereof can be solved   using small (i.e., polylogarithmic)  local communication. For example,
a maximal independent set can be computed in time $O(\log n)$ \cite{Luby86}, but not in time $\Omega(\sqrt{\log n / \log\log n})$ \cite{kuhn-podc04} ($n$ is the  network size). This lower bound even holds if message sizes are unbounded.

However many important optimization problems  are ``global'' problems from
the distributed computation point of view. To count the total number of nodes, to determining the diameter of the system, or to compute a spanning tree, information necessarily must travel to the farthest nodes in a system. If exchanging a  message over a single edge costs one time unit, one needs $\Omega(D)$ time units to compute the result, where $D$ is the network diameter. If message size was unbounded, one can simply collect all the information in $O(D)$ time, and then compute the result. Hence, in order to arrive at a realistic problem, we need to introduce communication limits, i.e., each node can exchange messages with each of its neighbors in each step of a synchronous system, but each message can have at most $B$ bits (typically $B$ is small, say
 $O(\log n)$).  However, to compute a spanning tree, even single-bit messages are enough, as one can simply  breadth-first-search the graph in time $O(D)$ and this is optimal \cite{peleg}.

But, can we \emph{verify} whether an existing subgraph that is claimed to be a spanning tree indeed is a correct spanning tree?!   In this paper we show that this is not generally possible in $O(D)$ time -- instead one needs $\Omega(\sqrt{n} + D)$ time. (Thus, in contrast to traditional non-distributed  complexity, verification is harder  than computation in the distributed world!).  Our paper is more general, as we show interesting lower and upper bounds (these are almost tight) for a whole selection of verification problems. Furthermore, we show
a key application of studying such verification problems to proving
strong  unconditional time lower bounds on exact and approximate distributed computation for many classical problems.

\subsection{Technical Background and Previous Work}\label{subsec:intro:background}
\paragraph{Distributed Computing}
Consider a synchronous  network of processors  with unbounded computational power. The network is modeled by an undirected $n$-vertex graph, where vertices model the processors and  edges model the links between the processors. The processors  (henceforth, vertices) communicate  by exchanging
messages via the links (henceforth, edges).  The vertices  have limited global knowledge, in particular, each of them has its own local perspective of the network (a.k.a graph), which is confined to its immediate neighborhood.  The vertices may have to compute (cooperatively)
some global function of the graph, such as a spanning tree (ST) or a minimum spanning tree (MST), via communicating with each other and running a distributed algorithm designed for the task at hand.
There are several measures to analyze the performance of such algorithms,
a fundamental one being the running time, defined as the worst-case number of {\em rounds} of distributed communication. This measure naturally gives rise to a  complexity measure of problems, called the {\em time complexity}.
On each round at most $B$ bits can be sent through each edge in each direction, where $B$ is the bandwidth parameter of the network. The design of efficient algorithms for this model (henceforth, the $B$ model), as well as establishing lower bounds on the time complexity of various fundamental graph problems, has been the subject of an active area of research called (locality-sensitive)  {\em distributed computing} (see \cite{peleg} and references therein.)

\paragraph{Distributed Algorithms, Approximation, and Hardness}
Much of the initial research focus in the area of distributed computing  was
on designing algorithms for solving problems exactly, e.g., distributed algorithms for ST, MST, and shortest paths are well-known \cite{peleg, lynch}.
Over the last few years, there has been interest in designing distributed algorithms
that provide approximate solutions to problems. This area is known as {\em distributed approximation}. One motivation for designing such algorithms is that they can run faster or have better communication complexity albeit at the cost of providing suboptimal solution. This can be especially appealing for resource-constrained and dynamic networks (such as sensor or peer-to-peer networks). For example, there is not much point in having
an optimal algorithm in a dynamic network if it takes too much time, since
the topology could have changed by that time.
For this reason, in the distributed context,  such algorithms are well-motivated even for network optimization problems that are not NP-hard, e.g., minimum spanning tree, shortest paths etc.
There is a large body of work on distributed approximation algorithms for  various classical graph optimization problems (e.g., see the surveys by Elkin
\cite{Elkin-sigact04} and Dubhashi et al. \cite{Dubhashi}, and the work
of \cite{KhanKMPT08} and the references therein).

While a lot of progress has been made in the design of distributed approximation
algorithms, the same has not been the case  with the theory of lower bounds on the approximability of distributed problems, i.e., {\em hardness} of distributed approximation.  There are some inapproximability results that are based on lower bounds on the time complexity of the exact solution of certain problems and on integrality of the objective functions of these problems. For example, a fundamental result due to Linial \cite{linial} says that 3-coloring an $n$-vertex ring requires $\Omega(\log^*n)$ time. In particular, it implies that any 3/2-approximation protocol for the vertex-coloring problem requires $\Omega(\log^*n)$ time. On the other hand, one can state inapproximability results assuming that vertices are computationally limited; under this assumption, any NP-hardness inapproximability result immediately implies an analogous result in the distributed model. However,  the above results are not interesting in the distributed setting, as they provide no new insights on the roles of locality and communication \cite{Elkin06}.

There are but a few significant results currently known on the hardness of distributed approximation. Perhaps the first important result was presented for the MST problem by Elkin in
\cite{Elkin06}. Specifically, he showed  strong {\em unconditional} lower bounds (i.e., ones that do not depend on complexity-theoretic assumptions)  for distributed approximate MST (more on this result below).
Later, Kuhn, Moscibroda, and Wattenhofer \cite{kuhn-podc04} showed  lower bounds on time approximation trade-offs for several problems.



\subsection{Distributed Verification}
The above discussion summarized two major research aspects in distributed computing, namely studying distributed algorithms and lower bounds for (1) exact and (2) approximate solutions to various problems.
The third aspect --- that  turns out to have remarkable applications to the first two ---  called {\em distributed verification}, is the main subject of the current paper. In distributed verification, we want to efficiently check
whether a given subgraph of a network has a specified property via a distributed algorithm\footnote{Such problems have been studied  in the sequential setting, e.g., Tarjan~\cite{tarjan} studied
verification of MST.}.
Formally, given a  graph $G=(V,E)$,  a subgraph $H=(V,E')$ with $E'\subseteq E$, and a predicate $\Pi$,
it is required to decide whether $H$ satisfies $\Pi$ (i.e., when the algorithm terminates, every node knows whether $H$ satisfies $\Pi$).  The predicate $\Pi$ may specify statements such as ``$H$ is connected'' or ``$H$ is a spanning tree" or ``$H$ contains a cycle''. (Each vertex in $G$ knows which of its incident edges (if any) belong to $H$.)
The goal is to study bounds on the time complexity of distributed verification. The time complexity
of the verification algorithm is measured with respect to parameters of $G$
(in particular, its size $n$ and diameter
$D$), independently from $H$.


We note that verification is different from construction  problems, which have been the traditional focus in distributed computing.
Indeed, distributed algorithms for constructing spanning trees, shortest paths, and other problems  have been well studied (\cite{peleg, lynch}). However, the corresponding verification problems
have received much less attention. To the best of our knowledge,  the only distributed verification problem that has received some attention is the
MST (i.e., verifying if $H$ is a MST); the recent work of Kor et al. \cite{KorKP10} gives a $\Omega(\sqrt{n}/B + D)$ deterministic lower bound on distributed verification of MST, where $D$ is the diameter
of the network $G$.
That paper also gives a matching upper bound (see also \cite{amos-kutten}).
Note that distributed {\em construction} of MST has rather similar lower and upper bounds \cite{PelegR00, peleg-mst}. Thus in the case of the MST problem, verification and construction
have the same time complexity.
 We later show that the above result of Kor et al. is subsumed by the results of this paper, as we show that verifying {\em any}
spanning tree takes so much time.

\paragraph{Motivations} The study of distributed verification has two main motivations. The first is understanding the complexity of verification versus construction. This is obviously a central question in the traditional RAM model, but here we want
to focus on the same question in the distributed model. Unlike in the centralized setting, it turns out that verification is {\em not} in general easier than construction in the distributed setting!
In fact, as was indicated earlier, distributively verifying a spanning tree
turns out to be harder than constructing it in the worst case.
Thus understanding the complexity
of verification in the distributed model is also important. Second, from an algorithmic point of view, for some problems, understanding the verification
problem can help in solving the construction problem  or showing the inherent limitations in obtaining an efficient algorithm.  In addition to these, there is
yet another motivation that emerges from this work: We show that distributed verification leads to showing {\em strong unconditional lower bounds on  distributed  computation (both exact and approximate)} for a variety of problems, many hitherto unknown.
For example, we show that establishing a lower bound on the spanning connected subgraph verification problem leads to establishing lower bounds
for the minimum spanning tree, shortest path tree, minimum cut etc.  Hence, studying  verification problems may lead to proving hardness of approximation as well as lower bounds for exact computation
for new problems.


%
%
%

\subsection{Our Contributions}
In this paper, our main contributions are twofold. First, we initiate a
systematic  study  of {\em distributed verification}, and give almost tight uniform lower bounds on the running time of distributed verification algorithms for many fundamental problems.
  Second, we make progress in establishing strong hardness results on the distributed approximation  of many classical optimization problems. Our lower bounds also apply seamlessly
  to exact algorithms. We next state our main results (the precise theorem statements are in
  the respective sections as mentioned below).


\paragraph{1. Distributed Verification} We show a  lower bound of $\Omega(\sqrt{n/(B\log n)}+D)$ for many verification problems in the $B$ model, including {\em spanning connected subgraph}, {\em $s$-$t$ connectivity}, {\em cycle-containment}, {\em bipartiteness}, {\em cut}, {\em least-element list}, and {\em $s-t$ cut}  (cf. definitions in Section~\ref{sec:randomized_lb}). These bounds apply to {\em Monte Carlo randomized} algorithms as well and clearly hold also for asynchronous networks. Moreover, it is important to note that our lower bounds apply even to graphs of small diameter ($D = O(\log n)$). Furthermore we present slightly weaker lower bounds for even smaller (constant) diameters.
(Indeed, the  problems studied in this paper are ``global" problems, i.e., the network diameter of $G$ imposes an inherent lower bound on the time complexity.)

Additionally, we show that another fundamental problem, namely, the spanning tree verification problem (i.e., verifying
whether $H$ is a spanning tree) has the same lower bound of $\Omega(\sqrt{n/(B\log n)} +D)$ (cf. Section \ref{sec:deterministic_SHORT}). However,  this bound applies to only deterministic algorithms. This result strengthens the deterministic lower bound result of minimum spanning tree verification by Kor et al.~\cite{KorKP10} in that it shows that the same lower bound holds even on the simpler problem of spanning tree verification.
Moreover, we note the interesting fact that although finding a spanning tree (e.g., a breadth-first tree) can be done in $O(D)$ rounds \cite{peleg}, verifying if a given subgraph is a spanning tree requires $\tilde{\Omega}(\sqrt{n}+D)$ rounds! Thus the verification problem for spanning trees is harder than its construction in the distributed setting. This is in contrast to this well-studied problem in the centralized setting. Apart from the spanning tree verification problem, we also show deterministic lower bounds for other verification problems, including {\em Hamiltonian cycle} and {\em simple path} verification.

Our lower bounds are almost tight as we show that there exist algorithms that run in $O(\sqrt{n}\log^*n + D)$ rounds (assuming $B = O(\log n)$) for almost all the verification problems addressed here (cf.  Section~\ref{sec:tightness}).


\paragraph{2. Bounds on Hardness of Distributed Approximation} An important consequence of our verification lower bound is that it leads to lower bounds for exact and  approximate distributed computation. We show the unconditional time lower bound of $\Omega(\sqrt{n/(B\log n)} + D)$ for approximating many optimization problems,  including {\em MST, shortest $s-t$ path, shortest path tree}, and {\em minimum cut} (Section~\ref{sec:approxalgo}). The important point to note is that the above lower bound applies for  {\em any} approximation ratio $\alpha \geq 1$. Thus the same bound holds for exact algorithms as well (that is $\alpha = 1$). All these hardness bounds hold for randomized algorithms. (In fact, these bounds hold for {\em Monte Carlo} randomized algorithms while previous lower bounds \cite{PelegR00,Elkin06,LotkerPP06} hold only for {\em Las Vegas} randomized algorithms.)
As in our verification lower bounds, these bounds apply even to graphs of small ($O(\log n)$) diameter.
 Figure \ref{fig:lower bounds on various diameters} summarizes our lower bounds
 for various diameters.

Our results improve over previous ones (e.g., Elkin's lower bound for approximate MST and shortest path tree \cite{Elkin06}) and subsumes some well-established exact bounds (e.g., Peleg and Rubinovich  lower bound for MST
 \cite{PelegR00})  as well as show new strong bounds (both for exact and approximate computation) for many other problems (e.g., minimum cut),
  thus answering some questions that were open earlier (see the survey by Elkin \cite{Elkin-sigact04}).

The new lower bound for approximating MST simplifies and improves upon the previous $\Omega(\sqrt{n/(\alpha B \log n)} + D)$ lower bound by Elkin
\cite{Elkin06}, where $\alpha$ is the approximation factor. \cite{Elkin06} showed a {\em tradeoff} between  the running time and the approximation ratio of MST. Our result shows that approximating MST requires $\Omega(\sqrt{n/(B \log n)} + D)$ rounds, {\em regardless of $\alpha$}. Thus our result shows that there is actually
no trade-off, since there  can be no distributed approximation algorithm for MST that is significantly faster than the current exact algorithm \cite{KuttenP98, elkin-faster}, for any approximation factor $\alpha > 1$.

\begin{figure*}
  \centering
  \begin{tabular}{c|c|c}
  \hline
  &  \footnotesize Previous lower bound for MST- & \footnotesize  New lower bound for MST-, \\
  Diameter $D$ & \footnotesize and shortest-path tree-approx. \cite{Elkin06} &  \footnotesize  shortest-path tree-approx. and \\
  & \footnotesize  (for exact algorithms, use $\alpha=1$) &  \footnotesize  all problems in Fig.~\ref{fig:all_reductions}.\\
  \hline
  $n^\delta,\ 0<\delta<1/2$  & $\Omega\left(\sqrt{\frac{n}{\alpha B}}\right)$ & $\Omega\left(\sqrt{\frac{n}{B}}\right)$  \\
  $\Theta(\log n)$ & $\Omega\left(\sqrt{\frac{n}{\alpha B\log n}}\right)$ & $\Omega\left(\sqrt{\frac{n}{B\log n}}\right)$\\
  Constant $\geq 3$  & $\Omega\left((\frac{n}{\alpha B})^{\frac{1}{2}-\frac{1}{2D-2}}\right)$ & $\Omega\left((\frac{n}{B})^{\frac{1}{2}-\frac{1}{2D-2}}\right)$\\
  4 & $\Omega\left((\frac{n}{\alpha B})^{1/3}\right)$ & $\Omega\left((\frac{n}{B})^{1/3}\right)$\\
  3 & $\Omega\left((\frac{n}{\alpha B})^{1/4}\right)$ & $\Omega\left((\frac{n}{B})^{1/4}\right)$\\
  \hline
\end{tabular}
  \caption{Lower bounds of randomized $\alpha$-approximation algorithms on graphs of various diameters. Bounds in the first column are for the MST and shortest path tree problems~\cite{Elkin06} while those in the second column are for these problems and many problems listed in Figure~\ref{fig:all_reductions}. We note that these bounds almost match the $O(\sqrt{n}\log^*{n}+D)$ upper bound for the MST problem~\cite{peleg-mst,KuttenP98} and are independent of the approximation-factor $\alpha$. Also note a simple observation that lower bounds for graphs of diameter $D$ also hold for graphs of larger diameters.}\label{fig:lower bounds on various diameters}
\end{figure*}


\subsection{Overview of Technical Approach}

We prove our  lower bounds  by establishing an interesting connection between
communication complexity and distributed computing.
Our lower bound proofs consider the family of graphs evolved through a series of papers in the literature~\cite{Elkin06,LotkerPP06,PelegR00}.
However, while previous results \cite{PelegR00,Elkin06,LotkerPP06,KorKP10} rely on counting the number of states needed to solve the {\em mailing problem} (along with some sophisticated techniques for its variant, called {\em corrupted mailing problem}, in the case of approximation algorithm lower bounds) and use Yao's method~\cite{Yao77} (with appropriate input distributions) to get lower bounds for randomized algorithms, our results are achieved using a few steps of simple reductions, starting from problems in communication complexity, as follows (also see Figure~\ref{fig:all_reductions} for details).

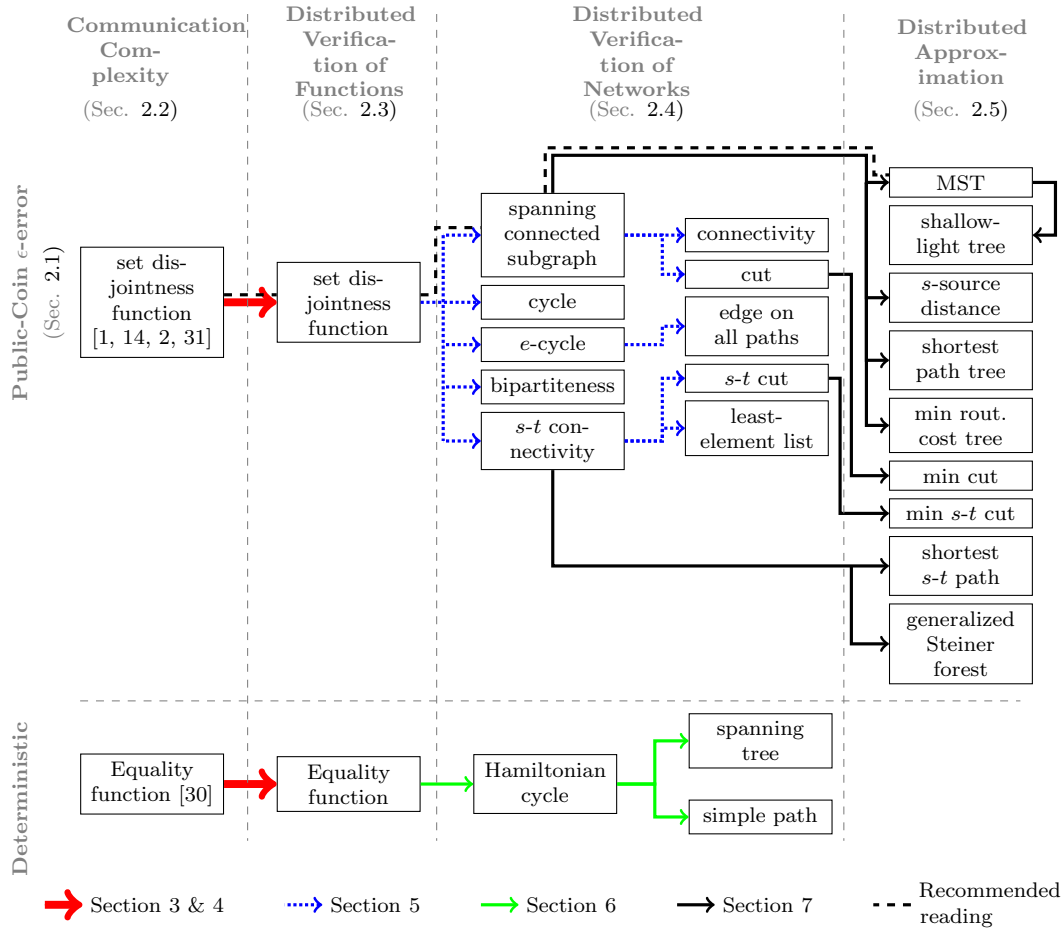
\begin{figure}
\begin{tikzpicture}[auto, node distance=0.1cm]
    \scriptsize
    \tikzstyle{mybox}=[rectangle, draw=black, text centered, text=black, text width=1.7cm]
%
    \node (span) [mybox] {spanning connected subgraph};
    \node (cycle) [mybox, below=of span] {cycle};
    \node (ecycle) [mybox, below=of cycle] {$e$-cycle};
    \node (bipartite) [mybox, below=of ecycle] {bipartiteness};
    \node  (stconn) [mybox, below=of bipartite]  {$s$-$t$ connectivity};
    \node (disj) [mybox, left=0.8 of cycle] {set disjointness function};
%
    \draw[<-, very thick, blue, densely dotted] (stconn.west) -- ++(-0.5, 0);
    \draw[->, very thick, blue, densely dotted] (disj.east) -- (cycle.west);
    \draw[<-, very thick, blue, densely dotted] (ecycle.west) -- ++(-0.5, 0);
    \draw[<-, very thick, blue, densely dotted] (bipartite.west) -- ++(-0.5, 0);
    \draw[<-, very thick, blue, densely dotted] (span.west) -- ++(-0.5, 0);
    \draw[-, very thick, blue, densely dotted] ([xshift=-0.5cm]span.west) -- ([xshift=-0.5cm]stconn.west);
%
%
%
    \node (conn) [mybox, right=0.8 of span] {connectivity};
    \node (cut) [mybox, below= of conn] {cut};
    \node (edgeallpaths) [mybox, below= of cut] {edge on all paths};
    \node (stcut) [mybox, below= of edgeallpaths] {$s$-$t$ cut};
    \node (lelist) [mybox, below= of stcut] {least-element list};
    \node (shtree) [mybox, right=0.8 of conn] {shallow-light tree};
    \node (mst) [mybox, above= of shtree] {MST};
    \node (sdist) [mybox, below= of shtree] {$s$-source distance};
    \node (sptree) [mybox, below= of sdist] {shortest path tree};
    \node (minrtree) [mybox, below= of sptree] {min rout. cost tree};
    \node (mincut) [mybox, below= of minrtree] {min cut};
    \node (minstcut) [mybox, below= of mincut] {min $s$-$t$ cut};
    \node (stpath) [mybox, below= of minstcut] {shortest $s$-$t$ path};
    \node (steiner) [mybox, below= of stpath] {generalized Steiner forest};
    \draw[->, very thick, black!100] (span.north) -- ++(0, 0.5) -| ([xshift=-0.3cm] mst.west) -- (mst.west);
    \draw[->, very thick, black!100]  ([xshift=-0.3cm] mst.west) |- (sdist.west);
    \draw[->, very thick, black!100]  ([xshift=-0.3cm] mst.west) |- (sptree.west);
    \draw[->, very thick, black!100]  ([xshift=-0.3cm] mst.west) |- (minrtree.west);
    \draw[->, very thick, black!100] (mst.east) -- ++(0.3, 0) |- (shtree.east);
    \draw[->, very thick, blue, densely dotted] (span.east) -- ++(0.5, 0) |- (conn.west);
    \draw[->, very thick, blue, densely dotted] (span.east) -- ++(0.5, 0) |- (cut.west);
    \draw[->, very thick, blue, densely dotted] (stconn.east) -- ++(0.5, 0) |- (stcut.west);
    \draw[->, very thick, blue, densely dotted] (stconn.east) -- ++(0.5, 0) |- (lelist.west);
    \draw[->, very thick, blue, densely dotted] (ecycle.east) -- ++(0.5, 0) |- (edgeallpaths.west);
    \draw[->, very thick, black!100] (cut.east) -- ++(0.3, 0) |- (mincut.west);
    \draw[->, very thick, black!100] (stcut.east) -- ++(0.15, 0) |- (minstcut.west);
    \draw[->, very thick, black!100] (stconn.south) -- ++(0, -0.5) |- (stpath.west);
    \draw[->, very thick, black!100] ([xshift=-0.5cm] stpath.west) |- (steiner.west);
    \node (ham_func) [mybox, below=5.5 of disj] {Equality function};
    \node (ham) [mybox, right=0.7 of ham_func] {Hamiltonian cycle};
    \node (blank) [rectangle, text height=0.0cm, text width=2cm, right=0.8 of ham] {};
    \node (spt) [mybox, above= of blank] {spanning tree};
    \node (simplepath) [mybox, below= of blank] {simple path};
    \draw[->, very thick, green] (ham.east) -- ++(0.5, 0) |- (spt.west);
    \draw[->, very thick, green] (ham.east) -- ++(0.5, 0) |- (simplepath.west);
    \draw[->, very thick, green] (ham_func.east) -- (ham.west);
    \node (disj_cc) [mybox, left=0.7 of disj] {set disjointness function \cite{BabaiFS86,KalyanasundaramS92,Bar-YossefJKS04,Razborov92} };
    \node (ham_cc) [mybox, left=0.7 of ham_func] {Equality function \cite{RazS93}};
    \draw[->, very thick, red, double=red] (disj_cc.east) -- (disj.west);
    \draw[->, very thick, red, double=red] (ham_cc.east) -- (ham_func.west);
%
    \node (approximation) [above=1 of mst, text width=1.7cm, text centered, text=gray] {{\bf Distributed Approximation}};
    \node (approximation_def) [below=0 of approximation.south, text width=1.7cm, text centered, text=gray] {{\scriptsize (Sec. \ref{subsec:approx_def})}};
    \draw [-, dashed, gray] ([xshift=-0.6cm, yshift=0.5cm] approximation.west) -- ++(0, -11);
    \node (graph_verification) [left=2.4 of approximation, text width=1.7cm, text centered, text=gray] {\bf Distributed Verification of Networks};
    \node (graph_verification_def) [left=2.4 of approximation_def, text width=1.7cm, text centered, text=gray] {{\scriptsize (Sec. \ref{subsec:verification_def})}};
    \draw [-, dashed, gray] ([xshift=-1.7cm, yshift=0.5cm] graph_verification.west) -- ++(0, -11);
    \node (func_verification) [left=1.9 of graph_verification, text width=1.7cm, text centered, text=gray] {\bf Distributed Verification of Functions};
    \node (func_verification_def) [left=1.9 of graph_verification_def, text width=1.7cm, text centered, text=gray] {{\scriptsize (Sec. \ref{subsec:func_verification_def})}};
    \draw [-, dashed, gray] ([xshift=-0.4cm, yshift=0.5cm] func_verification.west) -- ++(0, -11);
    \node (communication) [left=1 of func_verification, text width=1.7cm, text centered, text=gray] {\bf Communication Complexity};
    \node (communication_def)[left=1 of func_verification_def, text width=1.7cm, text centered, text=gray] {{\scriptsize (Sec. \ref{subsec:compcomp})}};
    \draw [-, dashed, gray] ([yshift=0.7cm] ham_cc.north west) -- ++(12.5, 0);
    \node (error2) [rotate=90, left=0.8 of disj_cc.north west, xshift=1cm, text=gray, text centered, text width=3cm]   {\bf Public-Coin $\epsilon$-error};
    \node (error2_def) [rotate=90, below=0 of error2, text=gray, text centered]   {(\scriptsize Sec. \ref{subsec:error_def})};
    \node (error1) [rotate=90, left=0.8 of ham_cc.north west, xshift=1cm, text=gray, text centered, text width=3cm]   {\bf  Deterministic};
%
    \node (blank1) [below=1.5 of ham_cc.east, text width=0cm] {};
    \node (cc_veri)[left=0 of blank1.east, text centered] {Section~\ref{sec:communication_complexity} \& \ref{sec:function_verification}};
    \draw [<-, very thick, red, double=red] (cc_veri.west) -- ++(-0.5, 0);
    \node (rand2)[right=1.2 of cc_veri.east, text centered] {Section~\ref{sec:randomized_lb}};
    \draw [<-, very thick, blue, densely dotted] (rand2.west) -- ++(-0.5, 0);
    \node (rand1)[right=1.2 of rand2.east, text centered] {Section~\ref{sec:deterministic_SHORT}};
    \draw [<-, very thick, green] (rand1.west) -- ++(-0.5, 0);
    \node (approx)[right=1.2 of rand1.east, text centered] {Section~\ref{sec:approxalgo}};
    \draw [<-, very thick, black] (approx.west) -- ++(-0.5, 0);
    \node (recommend)[right=1.2 of approx.east, text centered, text width = 1cm] {Recommended reading};
    \draw [-, very thick, dashed, black] (recommend.west) -- ++(-0.5, 0);
%
%
    \draw[-, very thick, dashed, black] ([yshift=0.1cm] disj_cc.east) -- ([yshift=0.1cm]disj.west);
    \draw[-, very thick, dashed, black] ([yshift=0.1cm] disj.east) -- ++(0.2,0) |-([yshift=0.1cm] span.west);
    \draw[-, very thick, dashed, black] ([xshift=-0.1cm] span.north) -- ++(0, 0.6) -| ([xshift=-0.2cm, yshift=0.1cm] mst.west) -- ([yshift=0.1cm] mst.west);

\end{tikzpicture}
\caption{Problems and reductions between them to obtain randomized and deterministic lower bounds. For all problems, we obtain lower bounds as in Figure~\ref{fig:lower bounds on various diameters}. In order to get the whole picture of the paper, we recommend reading along the black dashed line.
Definitions of (Monte Carlo) randomized algorithms can be found in Section~\ref{subsec:error_def}. Definitions of problems in communication complexity, distributed verification of functions, distributed verification of networks and distributed approximation, can be found in Section \ref{subsec:compcomp}, \ref{subsec:func_verification_def}, \ref{subsec:verification_def} and \ref{subsec:approx_def}, respectively. }\label{fig:all_reductions}
\end{figure}

(Section~\ref{sec:communication_complexity}) First, we reduce the lower bounds of problems in the standard communication complexity model~\cite{KNbook} to the lower bounds of the equivalent problems in the ``distributed version'' of communication complexity.
Specifically, we prove the {\em Simulation Theorem} (cf. Section~\ref{sec:communication_complexity}) which relates the {\em communication} lower bound from the standard communication complexity model \cite{KNbook} to compute  some appropriately chosen function $f$, to the distributed  {\em time} complexity lower bound for computing the same function
in a specially chosen graph $G$. In the standard model, Alice and Bob can communicate directly (via a bidirectional edge of bandwidth one).
In the distributed model, we assume that Alice and Bob are some vertices of  $G$ and they together wish to compute the function $f$ using the communication graph $G$. The choice of graph $G$ is critical.
We use a graph called $\graph$ (parameterized by $\Gamma$, $d$ and $p$) that was first used in \cite{Elkin06}.  We show a reduction from the standard model to the distributed model, the proof of which relies on some observations used in previous results (e.g., \cite{PelegR00}).

(Section~\ref{sec:function_verification}) The connection established in the first step allows us to bypass the state counting argument and Yao's method, and reduces our task in proving lower bounds of verification problems to merely picking the right function $f$ to reduce from.
The function $f$ that is useful in showing our randomized lower bounds is the {\em set disjointness function} \cite{BabaiFS86,KalyanasundaramS92,Bar-YossefJKS04,Razborov92}, which is the quintessential problem in the world of communication complexity with applications to diverse areas and has been studied for decades (see a recent survey in \cite{setdisj-survey}).
Following a result well known in communication complexity \cite{KNbook}, we show that the distributed version of this problem has an $\Omega(\sqrt{n/(B\log n)})$ lower bound on graphs of small diameter.

(Section~\ref{sec:randomized_lb} \& \ref{sec:deterministic_SHORT}) We then reduce this problem to the verification problems using simple reductions similar to those used in data streams~\cite{HenzingerRR98}. The set disjointness function yields randomized lower bounds and works for many problems (see Figure~\ref{fig:all_reductions}), but it does not reduce to certain other problems such as spanning tree. To show lower bounds for these other problems, we use a different function $f$ called {\em equality function}. However, this reduction yields only deterministic lower bounds for the corresponding verification problems.

(Section~\ref{sec:approxalgo}) Finally, we reduce the verification problem to hardness of distributed approximation for a variety of problems to show that the same lower bounds hold for approximation algorithms as well. For this,  we use a reduction whose idea is similar to one  used to prove hardness of approximating TSP (Traveling Salesman Problem) on general graphs (see, e.g., \cite{Vazirani-book}): We convert a verification problem to an optimization problem by introducing edge weights in such a way that there is a large gap between the optimal values for the cases where $H$ satisfies, or does not satisfy a certain property. This technique
is surprisingly simple, yet yields strong unconditional hardness bounds ---
many hitherto unknown, left open (e.g., minimum cut) \cite{Elkin-sigact04} and some that improve
over known ones (e.g., MST and shortest path tree)~\cite{Elkin06}. As mentioned earlier, our approach shows that
approximating MST by {\em any} factor needs $\tilde{\Omega}(\sqrt{n})$
time, while the previous result due to Elkin gave a bound that depends on  $\alpha$ (the approximation factor), i.e.  $\tilde{\Omega}(\sqrt{n/\alpha})$, using more sophisticated techniques.


%
Figure~\ref{fig:all_reductions} summarizes these reductions that will be proved in this paper.
Our proof technique via this approach is quite general and conceptually straightforward to apply as it hides all complexities in the well studied communication complexity. Yet, it yields tight lower bounds for many problems (we show almost matching upper bounds for many problems in Section~\ref{sec:tightness}).
It also has some advantages over the previous approaches.
First, in the previous approach, we have to start from scratch every time we want to prove a lower bound for a new problem. For example, extending from the mailing problem in \cite{PelegR00} to the corrupted mailing problem in \cite{Elkin06} requires some sophisticated techniques. Our new technique allows us to use known lower bounds in communication complexity to do such a task.
Secondly, extending a deterministic lower bound to a randomized one is sometimes difficult. As in our case, our randomized lower bound of the spanning connected subgraph problem would be almost impossible without connecting it to the communication complexity lower bound of the set disjointness problem (whose strong randomized lower bound is a result of years of studies \cite{BabaiFS86,KalyanasundaramS92,Bar-YossefJKS04,Razborov92}). One important consequence is that this technique allows us to obtain lower bounds for {\em Monte Carlo} randomized algorithms while previous lower bounds hold only for Las Vegas randomized algorithms.
We believe that this technique could lead to many new lower bounds of distributed algorithms.

\paragraph{Recent results} After the preliminary version of this paper  (\cite{DasSarmaHKKNPPW11}) appeared, the connection between communication complexity and distributed algorithm lower bounds has been further used to develop some new lower bounds. In \cite{NanongkaiDP11}, the Simulation Theorem (cf. Theorem~\ref{thm:cc_to_distributed}) is extended to show a connection between bounded-round communication complexity and  distributed algorithm lower bounds. It is then used to show a tight lower bound for distributed random walk algorithms. In \cite{FrischknechtHW11}, lower bounds of computing diameter of a network and related problems are shown by reduction from the communication complexity of set disjointness. This is done by considering the communication at the bottleneck of the network (sometimes called {\em bisection width} \cite{KNbook,Leighton91}).
A similar argument is also used in \cite{KuhnO11} to show lower bounds on directed networks.
%

\section{Preliminaries}\label{sec:prelim}
To make it easy to look up for definitions, we collect all necessary definitions in this section. We recommend the readers to skip this section in the first read and come back when necessary.

This section is organized as follows (also see Figure~\ref{fig:all_reductions} for a pointer to a subsection for each definition). In Subsection~\ref{subsec:error_def}, we define the notion of $\epsilon$-error randomized public-coin algorithms and the worse-case running time of these algorithms. In Subsection~\ref{subsec:compcomp}, we define the communication complexity model and the set disjointness and equality problems. We then extend this model to the model of distributed verification of functions in Subsection~\ref{subsec:func_verification_def}. In Subsection~\ref{subsec:verification_def}, we give a formal definition of the distributed verification problem which we explained informally in Section~\ref{sec:intro}. We also define the specific distributed verification problems considered in this paper. Finally, in Subsection~\ref{subsec:approx_def}, we define the notion of approximation algorithms.

\subsection{Randomized (Monte Carlo) Public-Coin Algorithms and the Worst-Case Running Time}\label{subsec:error_def}
In this paper, we show lower bounds of distributed algorithms that are {\em Monte Carlo}. Recall that a Monte Carlo algorithm is a randomized algorithm whose output may be incorrect with some probability.
Formally, let $\cA$ be any algorithm for computing a function $f$. We say that $\cA$ computes $f$ with {\em $\epsilon$-error} if for every input $x$, $\cA$ outputs $f(x)$ with probability at least $1-\epsilon$. Note that a $0$-error algorithm is deterministic.

We note the fact that lower bounds of Monte Carlo algorithms also imply lower bounds of {\em Las Vegas} algorithms (whose output is always correct but the running time is only in expectation). Thus, lower bounds in this paper hold for both types of algorithms. 

\paragraph{Public coin} We say that a randomized distributed algorithm uses a {\em public coin} if all nodes have an access to a common random string (chosen according to some probability distribution).
In this paper, we are interested in the lower bounds of public-coin randomized distributed algorithms. We note that these lower bounds also imply time lower bounds of {\em private-coin} randomized distributed algorithms, where nodes do not share a random string, since allowing a public coin only gives more power to the algorithms.

\paragraph{Worst-case running time} For any public-coin randomized distributed algorithm $\cA$ on a network $G$ and input $\cal I$ (given to nodes in $G$), we define the {\em worst-case running time} of $\cA$ on input $\cal I$ to be the maximum number of rounds needed to run $\cA$ among all possible (shared) random strings. The worst-case running time of $\cA$ is the maximum, over all inputs $\cal I$, of the worst-case running time of $\cA$ on $\cal I$.


\subsection{Communication Complexity}\label{subsec:compcomp}
In this paper, we consider the standard model of communication complexity. To avoid confusion, we define the model as a special case of the distributed algorithm model. We refer to \cite{KNbook} for the conventional definition, further details and discussions.

In this model, there are two nodes in the network connected by an edge. We call one node {\em Alice} and the other node {\em Bob}. Alice and Bob each receive a $b$-bit binary string, for some integer $b\geq 1$, denoted by $x$ and $y$ respectively. Together, they both want to compute $f(x, y)$ for a Boolean function $f:\{0, 1\}^b\times \{0, 1\}^b\rightarrow \{0, 1\}$.
%
%
%
In the end of the process, we want both Alice and Bob to know the value of $f(x, y)$. 
We are interested in the worst-case running time of distributed algorithms on this network when one bit can be sent on the edge in each round (thus the running time is equal to the number of bits Alice and Bob exchange).


For any Boolean function $f$ and $\epsilon>0$, we let $R_{\epsilon}^{cc-pub}(f)$ denote the minimum worst-case running time of the best $\epsilon$-error randomized algorithm for computing $f$ in the communication complexity model.

In this paper, we are interested in two Boolean functions, {\em set disjointness ($\DISJfunc$)} and {\em equality ($\EQfunc$)} functions, defined as follows.

\begin{itemize}
\item {\bf Set Disjointness function ($\DISJfunc$).} Given two $b$-bit strings $x$ and $y$, the {\em set disjointness function}, denoted by $\DISJfunc(x, y)$, is defined to be one if the inner product $\langle x, y\rangle$ is $0$ (i.e., there is no $i$ such that $x_i=y_i=1$) and zero otherwise.

\item {\bf Equality function ($\EQfunc$).} Given two $b$-bit strings $x$ and $y$, the {\em equality function}, denoted by $\EQfunc(x, y)$, is defined to be one if $x=y$ and zero otherwise.
\end{itemize}

\subsection{Distributed Verification of Functions}\label{subsec:func_verification_def}
%
%
We consider the same problem as in the case of communication complexity. That is, Alice and Bob receive $b$-bit binary strings $x$ and $y$ respectively and they want to compute $f(x, y)$ for some Boolean function $f$. However, Alice and Bob are now distinct vertices in a $B$-model distributed network $G$ (cf. Section~\ref{subsec:intro:background}). We denote Alice's node (which receives $x$) by $s$ and Bob's node (which receives $y$) by $r$. At the end of the process, both $s$ and $r$ will output $f(x, y)$. We are interested in the worst-case running time a distributed algorithm needs in order to compute function $f$.

For any network $G$ (with two nodes marked as $s$ and $r$), Boolean function $f$ and $\epsilon>0$, we let $R_{\epsilon}^{G}(f)$ denote the worst-case running time of the best $\epsilon$-error randomized distributed algorithm for computing $f$ on $G$.


In this model, we consider the set disjointness and equality functions as in the communication complexity model (cf. Subsection~\ref{subsec:compcomp}).

\subsection{Distributed Verification of Networks}\label{subsec:verification_def} We already gave an informal definition of this problem in Section~\ref{sec:intro}. We now define the problem formally. In the distributed network $G$, we describe its subgraph $H$ as an input as follows. Each node $v$ in $G$ with neighbors $u_1,\dots, u_{d(v)}$, where $d(v)$ is the degree of $v$, has $d(v)$ {\em Boolean indicator variables}  $Y_v(u_1), \dots, Y_v(u_{d(v)})$ indicating which of the edges incident to $v$  participate in the subgraph $H$.
The indicator variables must be consistent, i.e., for every edge $(u,v)$, $Y_v(u)=Y_u(v)$ (this is easy to verify locally with a single round of communication).

Let $H_Y$ be the set of edges whose indicator variables are $1$; that is,
$$H_Y=\{(u, v)\in E \mid Y_u(v)=1\}.$$
Given a predicate $\Pi$ (which may specify statements such as ``$H_Y$ is connected'' or ``$H_Y$ is a spanning tree" or ``$H_Y$ contains a cycle''), the output for a verification problem at each vertex $v$ is an assignment to a (Boolean) output variable $A^v$,
where $A^v=1$ if $H_Y$  satisfies the predicate $\Pi$, and $A^v=0$ otherwise.

We say that a distributed algorithm $\cA_{\Pi}$ verifies predicate $\Pi$ if, for every graph $G$ and subgraph $H_Y$ of $G$, all nodes in $G$ knows whether $H_Y$ satisfies $\Pi$ after we run $\cA_{\Pi}$; that is,
after the execution of $\cA_{\Pi}$ on graph $G$, at each vertex $v$ the output variable $A^v$ is one if $H_Y$ satisfies predicate $\Pi$, and zero otherwise. Note again that the time complexity
of the verification algorithm is measured with respect to the size and diameter of $G$ (independently from $H_Y$). When $Y$ is clear from the context, we use $H$ to denote $H_Y$.


We now define problems considered in this paper.

\begin{itemize}
\item
{\bf connected spanning subgraph verification:} We want to verify whether $H$ is connected and spans all nodes of $G$, i.e., every node in $G$ is incident to some edge in $H$.

\item
{\bf cycle containment verification:} We want to verify if $H$ contains a cycle. 

\item
{\bf $e$-cycle containment verification:} Given an edge $e$ in $H$ (known to vertices adjacent to it), we want to verify if $H$ contains a cycle containing $e$. 

\item
{\bf bipartiteness verification}: We want to verify whether $H$ is bipartite. 

\item
{\bf $s$-$t$ connectivity verification}: In addition to $G$ and $H$, we are given two vertices $s$ and $t$ ($s$ and $t$ are known by every vertex). We would like to verify whether $s$ and $t$ are in the same connected component of $H$. 

\item
{\bf connectivity verification}: We want to verify whether $H$ is connected.


\item
{\bf cut verification:} We want to verify whether $H$ is a cut of $G$, i.e., $G$ is not connected when we remove edges in $H$. 

\item
{\bf edge on all paths verification:} Given two nodes $u$, $v$ and an edge $e$. We want to verify whether $e$ lies on all paths between $u$ and $v$ in $H$. In other words, $e$ is a $u$-$v$ cut in $H$. 

\item
{\bf $s$-$t$ cut verification}: We want to verify whether $H$ is an $s$-$t$ cut, i.e., when we remove all edges $E_H$ of $H$ from $G$, we want to know whether $s$ and $t$ are in the same connected component or not. 

\item
{\bf least-element list verification~\cite{cohen,KhanKMPT08}:} The input of this problem is different from other problems and is as follows. Given a distinct rank (integer) $r(v)$ to each node $v$ in the weighted graph $G$, for any nodes $u$ and $v$, we say that $v$ is the {\em least element} of $u$ if $v$ has the lowest rank among vertices of distance at most $d(u, v)$ from $u$. Here, $d(u, v)$ denotes the weighted distance between $u$ and $v$. The {\em Least-Element List} (LE-list) of a node $u$ is the set $\{\langle v, d(u, v)\rangle \mid \mbox{$v$ is the least element of u}\}$. 

In the least-element list verification problem, each vertex knows its rank as an input, and some vertex $u$ is given a set $S=\{\langle v_1, d(u, v_1)\rangle, \langle v_2, d(u, v_2)\rangle, \ldots \}$ as an input. We want to verify whether $S$ is the least-element list of $u$. 

%
%
%

\item
{\bf Hamiltonian cycle verification:} We would like to verify whether $H$ is a Hamiltonian cycle of $G$, i.e., $H$ is a simple cycle of length $n$.

\item
{\bf spanning tree verification:} We would like to verify whether $H$ is a tree spanning $G$.

\item
{\bf simple path verification:} We would like to verify that $H$ is a simple path, i.e., all nodes have degree either zero or two in $H$ except two nodes that have degree one and there is no cycle in $H$.
\end{itemize}

\subsection{Approximation Algorithms}\label{subsec:approx_def}
In a graph optimization problem $\cP$ in a distributed network, such as finding a MST, we are given a non-negative weight $\omega(e)$ on each edge $e$ of the network (each node knows the weights of all edges incident to it). Each pair of network and weight function $(G, \omega)$ comes with a nonempty set of {\em feasible solutions} for a problem $\cP$; e.g., for the case of finding a MST, all spanning trees of $G$ are feasible solutions. The goal of $\cP$ is to find a feasible solution that minimizes or maximizes the total weight. We call such a solution an {\em optimal solution}. For example, a spanning tree of minimum weight is an optimal solution for the MST problem.

For any $\alpha\geq 1$, an {\em $\alpha$-approximate solution} of $\cP$ on weighted network $(G, \omega)$ is a feasible solution whose weight is not more than $\alpha$ (respectively,  $1/\alpha$) times of the weight of the optimal solution of $\cP$ if $\cP$ is a minimization (respectively, maximization) problem. We say that an algorithm $\cA$ is an $\alpha$-approximation algorithm for problem $\cP$ if it outputs an $\alpha$-approximate solution for any weighted network $(G, \omega)$.
In case of randomized algorithms (cf. Subsection~\ref{subsec:error_def}), we say that an $\alpha$-approximation $T$-time algorithm is $\epsilon$-error if it outputs an answer that is not $\alpha$-approximate with probability at most $\epsilon$ and always finishes in time $T$, regardless of the input and the choice of random string.
%

In this paper, we consider the following problems.

\begin{itemize}
\item In the {\bf minimum spanning tree} problem~\cite{Elkin06,PelegR00}, we want to compute the weight of the minimum spanning tree (i.e., the spanning tree of minimum weight). In the end of the process all nodes should know this weight.

\item Consider a network with two cost functions associated to edges, weight and length, and a root node $r$. For any spanning tree $T$, the radius of $T$ is the maximum length (defined by the length function) between $r$ and any leaf node of $T$. Given a root node $r$ and the desired radius $\ell$, a {\bf shallow-light tree}~\cite{peleg} is the spanning tree whose radius is at most $\ell$ and the total weight is minimized (among trees of the desired radius).

\item Given a node $s$, the {\bf $s$-source distance} problem~\cite{Elkin05} is to find the distance from $s$ to every node. In the end of the process, every node knows its distance from $s$.

\item In the {\bf shortest path tree} problem~\cite{Elkin06}, we want to find the shortest path spanning tree rooted at some input node $s$, i.e., the shortest path from $s$ to any node $t$ must have the same weight as the unique path from $s$ to $t$ in the solution tree. In the end of the process, each node should know which edges incident to it are in the shortest path tree.

\item The {\bf minimum routing cost spanning tree} problem~\cite{WuLBCRT99} is defined as follows. We think of the weight of an edge as the cost of routing messages through this edge. The routing cost between any node $u$ and $v$ in a given spanning tree $T$, denoted by $c_T(u, v)$, is the distance between them in $T$. The routing cost of the tree $T$ itself is the sum over all pairs of vertices of the routing cost for the pair in the tree, i.e., $\sum_{u, v\in V} c_T(u, v)$. Our goal is to find a spanning tree with minimum routing cost.

\item A set of edges $E'$ is a {\bf cut} of $G$ if $G$ is not connected when we delete $E'$. The {\bf minimum cut} problem~\cite{Elkin-sigact04} is to find a cut of minimum weight. A set of edges $E'$ is an {\em $s$-$t$ cut} if there is no path between $s$ and $t$ when we delete $E'$ from $G$.
    The {\bf minimum $s$-$t$ cut} problem is to find an $s$-$t$ cut of minimum weight.
    %

\item Given two nodes $s$ and $t$, the {\bf shortest $s$-$t$ path} problem is to find the length of the shortest path between $s$ and $t$.

\item The {\bf generalized Steiner forest} problem~\cite{KhanKMPT08} is defined as follows. We are given $k$ disjoint subsets of vertices $V_1, ..., V_k$ (each node knows which subset it is in). The goal is to find a minimum weight subgraph in which each pair of vertices belonging to the same subsets is connected. In the end of the process, each node knows which edges incident to it are in the solution.

\end{itemize}

Note that in the minimum spanning tree, minimum cut, minimum $s$-$t$ cut, and shortest $s$-$t$ path problems, an $\alpha$-approximation algorithm should find a solution that has total weight at most $\alpha$ times the weight of the optimal solution. For the $s$-source distance problem, an $\alpha$-approximation algorithm should find an approximate distance $d(v)$ of every vertex $v$ such that $distance(s, v) \leq d(v)\leq \alpha\cdot distance(s, v)$ where $distance(s, v)$ is the distance of $s$ from $v$. Similarly, an $\alpha$-approximation algorithm for the shortest path tree problem should find a spanning tree $T$ such that, for any node $v$, the length $\ell$ of a unique path from $s$ to $v$ in $T$ satisfies $\ell\leq \alpha \cdot distance(s, v)$.

\section{From Communication Complexity to Distributed Computing}\label{sec:communication_complexity}

%
%
In this section, we show a connection between the communication complexity model (cf. Section~\ref{subsec:compcomp}) and the model of distributed verification of functions (cf. Section~\ref{subsec:func_verification_def}) on a family of graphs called $\graph$. This family of graphs was first defined in \cite{Elkin06} (which was extended from \cite{PelegR00}). We will define this graph in Subsection~\ref{sec:ElkinGraph} for completeness.

The main result of this section shows that if there is a fast $\epsilon$-error algorithm for computing $f$  on $\graph$, then there is a fast $\epsilon$-error algorithm for Alice and Bob to compute $f$ in the communication complexity model. We call this the {\em Simulation Theorem}.
We state the theorem below. The rest of this section is devoted to define the graph $\graph$ and to prove the theorem.

\begin{theorem}[Simulation Theorem]\label{thm:cc_to_distributed}
For any $\Gamma$, $d$, $p$, $B$, $\epsilon\geq 0$, and function $f:\{0, 1\}^{b}\times \{0, 1\}^{b} \rightarrow \{0, 1\}$, if there is an $\epsilon$-error distributed algorithm on $\graph$ that computes $f$ faster than $\frac{d^p-1}{2}$ time, i.e.,
\[R_\epsilon^{\ElkinGraph}(f) < \frac{d^p-1}{2}\]
then there is an $\epsilon$-error algorithm in the communication complexity model that computes $f$ in at most $2dpBR_\epsilon^{\ElkinGraph}(f)$ time.
In other words,
\[R_\epsilon^{cc-pub}(f)\leq 2dpBR_\epsilon^{\ElkinGraph}(f)\,.\]
\end{theorem}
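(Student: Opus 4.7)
The plan is to prove the Simulation Theorem by a bisection-style simulation in which Alice and Bob jointly execute a given distributed algorithm $\cA$ on $\graph$. I would begin by invoking the structure of $\graph$ (to be given in Subsection~\ref{sec:ElkinGraph}) and fixing a vertex partition $V(\graph) = V_A \cup V_B$ with $V_A \cap V_B = \emptyset$, $s \in V_A$, and $r \in V_B$. The point of this partition is that $\graph$ admits such a split whose edge boundary $E_{AB} := \{(u,v) \in E : u \in V_A, v \in V_B\}$ decomposes into at most $dp$ ``short'' crossing edges that can convey input-dependent information in few rounds, together with some additional ``long-path'' crossing edges, each sitting inside a path of length at least $d^p-1$ between $V_A$-side and $V_B$-side.

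Given an $\epsilon$-error algorithm $\cA$ computing $f$ on $\graph$ in $T = R_\epsilon^{\graph}(f) < (d^p-1)/2$ rounds, the simulation protocol works as follows. Alice and Bob use the public random string $\rho$ to agree on the random bits of every vertex of $\graph$; they also share knowledge of $\graph$ and of the fixed partition. Alice locally simulates every vertex in $V_A$ (using her input $x$ as the input at $s$), and Bob locally simulates every vertex in $V_B$ (using $y$ as the input at $r$). Proceeding round by round, after round $t$ Alice maintains the state of every vertex in $V_A$ and Bob that of every vertex in $V_B$. To advance to round $t+1$, each party needs to know the $B$-bit messages delivered to its side along edges of $E_{AB}$: for the short crossing edges Alice and Bob explicitly exchange these messages, while for the long-path crossing edges, a locality/causality argument shows that within $T < (d^p-1)/2$ rounds neither endpoint of such an edge can have been causally affected by the input on the opposite side, so the message sent across it is a deterministic function of $\rho$ and $\graph$ alone and both parties compute it without communication.

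After $T$ rounds, the simulated copies of $s$ and $r$ output $f(x,y)$ with probability at least $1-\epsilon$, and Alice and Bob output their simulated copies' answers. The per-round communication cost is at most $2 \cdot (dp) \cdot B$ bits ($B$ bits in each direction across the at most $dp$ short crossing edges), giving a total cost of at most $2dpB \cdot T = 2dpB \cdot R_\epsilon^{\graph}(f)$ bits, which is the desired bound on $R_\epsilon^{cc-pub}(f)$. The main technical step I expect to carry out carefully is the quantitative claim that at most $dp$ crossing edges can carry input-dependent information within $T$ rounds; this is where the hypothesis $T < (d^p-1)/2$ is crucial and relies on the specific long-back-path structure of $\graph$ together with the standard observation that the state of any vertex $v$ after $t$ rounds depends only on the inputs of vertices within graph-distance $t$ of $v$. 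Once this bound on the effective bisection size is established, the rest of the argument is a routine round-by-round accounting.
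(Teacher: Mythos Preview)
Your locality argument for the ``long-path'' crossing edges does not go through, and this is a genuine gap. The whole point of $\graph$ is that the tree $\cT$ gives \emph{every} vertex a short route to both $s$ and $r$: for any path vertex $v_j^\ell$ one reaches $s=u_0^p$ (and similarly $r$) in at most $2p+1$ hops via the spoke edge $(v_j^\ell,u_j^p)$ and then up and down $\cT$. Hence after only $O(p)$ rounds the state of the midpoint of every path $\cP^\ell$ already depends on both $x$ and $y$, so the message it sends across your path crossing edge is \emph{not} a function of $\rho$ and $\graph$ alone. With a fixed bipartition you are therefore forced to transmit all $\Gamma$ path-crossing messages every round, and in the parameter regime used later one has $\Gamma=\Theta(d^{p+1}pB)\gg dp$, which destroys the bound $2dpB\cdot T$.

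The paper's proof avoids this by abandoning a fixed partition. Instead it maintains \emph{shrinking, overlapping} sets $L_0\supseteq L_1\supseteq\cdots$ and $R_0\supseteq R_1\supseteq\cdots$ (with $s\in L_t$ and $r\in R_t$ for all $t<(d^p-1)/2$): Alice tracks the full configuration on $L_t$ and Bob that on $R_t$. Because each set recedes by exactly one position along every path per round, the new path boundary of $R_t$ lies strictly inside $R_{t-1}$, so no path edge ever needs to be communicated; the only edges from $V\setminus R_{t-1}$ into $R_t$ are tree edges, and a level-by-level count (Lemma~\ref{thm:computing_subconfig}) shows there are at most $dp$ of them. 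That is where the factor $dp$ actually comes from, and it is essential that the simulated regions move inward rather than stay fixed.
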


We first describe the graph $\graph$ with parameters $\Gamma$, $d$ and $p$ and distinct vertices $s$ and $r$.

\subsection{Description of $\graph$~\cite{Elkin06}} \label{sec:ElkinGraph}

\begin{figure}
  \centering
 {\scriptsize
    \psfrag{A}[c]{$\cP^1$}
    \psfrag{B}[c]{$\cP^\ell$}
    \psfrag{C}[c]{$\cP^{\Gamma}$}
    \psfrag{a}[c]{$v_0^1$}
    \psfrag{b}[c]{$v_1^1$}
    \psfrag{c}[c]{$v_2^1$}
    \psfrag{d}{$v_{d^p-1}^1$}
    \psfrag{e}[c]{$v_0^\ell$}
    \psfrag{f}[c]{$v_1^\ell$}
    \psfrag{g}[c]{$v_2^\ell$}
    \psfrag{h}{$v_{d^p-1}^\ell$}
    \psfrag{i}[c]{$v_0^{\Gamma}$}
    \psfrag{j}[c]{$v_1^{\Gamma}$}
    \psfrag{k}[c]{$v_2^{\Gamma}$}
    \psfrag{l}[l]{$v_{d^p-1}^{\Gamma}$}
    \psfrag{s}[r]{$s=u^p_0$}
    \psfrag{r}{$u^p_{d^p-1}=r$}
    \psfrag{m}[c]{$u^p_{1}$}
    \psfrag{n}[c]{$u^p_{2}$}
    \psfrag{o}{$u^0_0$}
    \psfrag{p}{$u^{p-1}_0$}
    \psfrag{q}{$u^{p-1}_1$}
    \psfrag{D}{$R_2$}
    \psfrag{E}{$R_1$}
    \psfrag{F}{$R_0$}
    \begin{center}
    \includegraphics[width=0.4\linewidth]{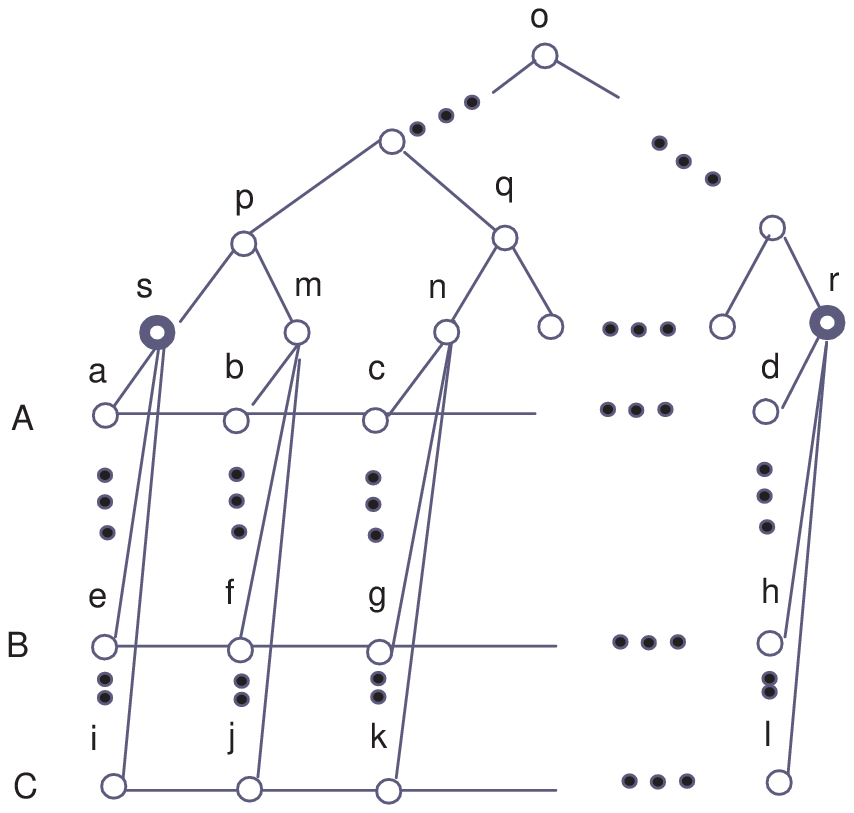}
    \end{center}
    }
    \caption{An example of $\ElkinGraph$ (here $d=2$).}\label{fig:GraphExample}\label{fig:ElkinGraph}
\end{figure}

We now describe the network $\graph$ in detail. The two basic units in the construction are {\em paths} and a {\em tree}. There are $\Gamma$ paths, denoted by $\cP^1, \cP^2, \ldots, \cP^{\Gamma}$, each having $d^p$ nodes, i.e., for $\ell=1, 2, \ldots \Gamma$,
%
$$V(\cP^\ell)= \{v_0^\ell,\dots,v_{d^p-1}^\ell\}
~~~~~\mbox{and}~~~~~
E(\cP^\ell) = \{(v_i^\ell, v_{i+1}^\ell) \mid 0 \le i < d^p-1\}\,.$$
There is a tree, denoted by $\cT$ having depth $p$ where each non-leaf node has $d$ children (thus, there are $d^p$ leaf nodes). We denote the nodes of $\cT$ at level $\ell$ from left to right by $u^\ell_0, \ldots, u^\ell_{d^\ell-1}$ (so, $u^0_0$ is the root of $\cT$ and $u^p_0, \ldots, u^p_{d^p-1}$ are the leaves of $\cT$). %
For any $\ell$ and $j$, the leaf node $u^p_j$ is connected to the corresponding path node $v_{j}^{\ell}$ by a {\em spoke} edge $(u^p_j, v_j^\ell)$. Finally, we set the two special nodes (which will receive input strings $x$ and $y$) as $s=u_0^p$ and $r=u_{d^p-1}^p$.
%
%
Figure~\ref{fig:ElkinGraph} depicts this network. We note the following lemma proved in \cite{Elkin06}.


\begin{lemma}\label{lem:size}\cite{Elkin06}
The number of vertices in $\graph$ is $n = \Theta(\Gamma d^p)$ and its diameter is $2p+2$.
\end{lemma}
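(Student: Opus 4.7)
The lemma has two parts, which I would handle separately.

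\textbf{Vertex count.} The $\Gamma$ paths $\cP^1, \ldots, \cP^\Gamma$ are pairwise vertex-disjoint and each contains $d^p$ vertices, contributing $\Gamma d^p$ vertices in total. The tree $\cT$ is a complete $d$-ary tree of depth $p$, contributing $\sum_{\ell=0}^{p} d^\ell = (d^{p+1}-1)/(d-1) = \Theta(d^p)$ vertices (for $d\ge 2$). Since the spoke edges introduce no new vertices, $n = \Gamma d^p + \Theta(d^p) = \Theta(\Gamma d^p)$, as claimed.

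\textbf{Diameter upper bound.} I would show that every vertex of $\graph$ lies within $p+1$ hops of the tree root $u^0_0$: a tree vertex at level $\ell$ is $\ell\le p$ hops from $u^0_0$, and a path vertex $v_i^\ell$ reaches $u^0_0$ in $p+1$ hops via the spoke $(v_i^\ell, u^p_i)$ followed by the root-ward path in $\cT$. Routing any pair of vertices through $u^0_0$ therefore yields a walk of length at most $2p+2$, giving $\operatorname{diam}(\graph) \le 2p+2$.

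\textbf{Diameter lower bound.} For the matching lower bound I would exhibit the pair $v_0^1, v_{d^p-1}^2$ (assuming $\Gamma\ge 2$, which is the intended regime) and argue that every walk between them has length at least $2p+2$. Such a walk admits a ``first spoke out of $\cP^1$'' at some leaf $u^p_i$ and a ``last spoke into $\cP^2$'' at some leaf $u^p_j$, yielding length at least $i + 1 + d_\cT(u^p_i,u^p_j) + 1 + (d^p-1-j)$; any additional tree-to-path-and-back excursion contributes at least two extra edges without reducing this quantity, so the single-excursion form is optimal. Using the identity $d_\cT(u^p_i, u^p_j) = 2(p-\ell^*)$, where $\ell^*$ is the depth of the lowest common ancestor of $u^p_i$ and $u^p_j$ in $\cT$, a direct minimization over $0\le i,j\le d^p-1$ shows the bound is $\ge 2p+2$, attained at $(i,j)=(0, d^p-1)$ (where $\ell^*=0$ and the path segments vanish); any other choice pays more in the path-segment cost $i + (d^p-1-j)$ than it can save in tree distance, provided $d^p \ge 2p+2$. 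Combined with the upper bound, this gives $\operatorname{diam}(\graph) = 2p+2$.

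The main obstacle is the uniform handling of mixed walks (multiple tree-to-path excursions, or routings through auxiliary paths $\cP^\ell$ with $\ell\notin\{1,2\}$) in the lower bound. I would finesse this by the two-edge excursion bound above and by observing that a detour through any $\cP^\ell$ via two spokes at a common leaf $u^p_i$ costs exactly two edges and leaves the walk at $u^p_i$, so no such detour can ever shorten the walk.
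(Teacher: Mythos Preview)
The paper does not supply its own proof of this lemma; it merely quotes the statement from \cite{Elkin06}. There is therefore nothing in the paper to compare against, and I comment only on the soundness of your argument.

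Your vertex count and the diameter upper bound are correct, and those are the only parts the paper actually relies on (what is needed downstream is that $D=O(p)$ while $n=\Theta(\Gamma d^p)$).

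The diameter \emph{lower bound}, however, has a genuine gap. You assert that the middle segment of the walk, from $u^p_i$ to $u^p_j$, costs at least $d_\cT(u^p_i,u^p_j)$, and that any excursion into a path only adds edges. But a walk may leave the tree at one leaf, travel along some $\cP^\ell$, and re-enter the tree at a \emph{different} leaf, and this can be strictly shorter than the tree route. Concretely, for $d=2$, $p=3$ the leaves $u^p_3$ and $u^p_4$ have tree distance $6$ (their lowest common ancestor is the root), yet the route $u^p_3 \to v_3^\ell \to v_4^\ell \to u^p_4$ has length $3$. Your closing remark handles only the degenerate case where the walk enters and exits a path at the \emph{same} leaf $u^p_i$; it says nothing about these non-trivial shortcuts. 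Consequently the displayed lower bound $i+1+d_\cT(u^p_i,u^p_j)+1+(d^p-1-j)$ is not a valid lower bound on the walk length, and the minimization built on it does not establish the claim. (Your instinct that a side condition such as $d^p\ge 2p+1$ is required is correct --- indeed for $d=2$ and $p\in\{1,2\}$ one checks directly that the diameter is $2p+1<2p+2$ --- but even under that hypothesis your argument, as written, does not close.)
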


\subsection{Terminologies}
For any $1\leq i\leq \lfloor (d^p-1)/2\rfloor$, define the {\em $i$-left} and the {\em $i$-right} of path $\mathcal{P}^\ell$ as
%
%
\[L_i(\mathcal{P}^\ell)=\{v^\ell_j\ |\ j \leq d^p-1-i\}
~~~~~\mbox{and}~~~~~
R_i(\mathcal{P}^\ell)=\{v^\ell_j\ |\ j \geq i\}\,, \]
respectively. Thus, $L_0(\cP^\ell)=R_0(\cP^\ell)=V(\cP^\ell)$. Define the $i$-left of the tree $\cT$, denoted by $L_i(\cT)$, as the union of the set $S=\{u^p_j\ |\ j \leq d^p-1-i\}$ and all ancestors of all vertices in $S$. Similarly, the $i$-right $R_i(\cT)$ of the tree $\cT$ is the union of set $S=\{u^p_j\ |\ j \geq i\}$ and all ancestors of all vertices in $S$.
Now, the {\em $i$-left} and {\em $i$-right} sets of $\graph$ are the union of those left and right sets,
%
%
\[L_i = \bigcup_\ell L_i(\mathcal{P}^\ell) \cup  L_i(\cT)
~~~~~\mbox{and}~~~~~
R_i = \bigcup_\ell R_i(\mathcal{P}^\ell) \cup R_i(\cT)\,.\]
For $i=0$, we modify the definition and set $L_0 = V\setminus \{r\}$ and $R_0 = V\setminus \{s\}\,.$ See Figure~\ref{fig:graph}.

Let $\mathcal{A}$ be any {\em deterministic} distributed algorithm run on graph $\graph$ for computing a function $f$. Fix any input strings $x$ and $y$ given to $s$ and $r$ respectively. Let $\varphi_\cA(x, y)$ denote the execution of $\mathcal{A}$ on $x$ and $y$. Denote the {\em state} of the vertex $v$ at the end of round $t$ during the execution $\varphi_\cA(x, y)$ by $\sigma_\cA(v, t, x, y)$.

We note the following important property of distributed algorithms.
%
{\em The state of a vertex $v$ at the end of time $t$ is uniquely determined by its input and the sequence of messages on each of its incoming links from time $1$ to $t$.}
%
Intuitively, this is because a distributed algorithm is simply a set of algorithms run on different nodes in a network. The algorithm on each node behaves according to its input and the sequence of messages sent to it so far.
From this, for example, we can conclude that in two different executions $\varphi_\cA(x, y)$ and $\varphi_\cA(x', y')$, a vertex reaches the same state at time $t$ (i.e., $\sigma_\cA(v, t, x, y)=\sigma_\cA(v, t, x', y')$) if and only if it receives the same sequence of messages on each of its incoming links.

\begin{figure}
  \centering
  \scriptsize
    {
    \psfrag{A}[c]{$\cP^1$}
    \psfrag{B}[c]{$\cP^\ell$}
    \psfrag{C}[c]{$\cP^{\Gamma}$}
    \psfrag{a}[c]{$v_0^1$}
    \psfrag{b}[c]{$v_1^1$}
    \psfrag{c}[c]{$v_2^1$}
    \psfrag{d}{$v_{d^p-1}^1$}
    \psfrag{e}[c]{$v_0^\ell$}
    \psfrag{f}[c]{$v_1^\ell$}
    \psfrag{g}[c]{$v_2^\ell$}
    \psfrag{h}{$v_{d^p-1}^\ell$}
    \psfrag{i}[c]{$v_0^{\Gamma}$}
    \psfrag{j}[c]{$v_1^{\Gamma}$}
    \psfrag{k}[c]{$v_2^{\Gamma}$}
    \psfrag{l}{$v_{d^p-1}^{\Gamma}$}
    \psfrag{s}[r]{$s=u^p_0$}
    \psfrag{r}{$u^p_{d^p-1}=r$}
    \psfrag{m}[c]{$u^p_{1}$}
    \psfrag{n}[c]{$u^p_{2}$}
    \psfrag{o}{$u^0_0$}
    \psfrag{p}{$u^{p-1}_0$}
    \psfrag{q}{$u^{p-1}_1$}
    \psfrag{D}{$R_2$}
    \psfrag{E}{$R_1$}
    \psfrag{F}{$R_0$}
    \includegraphics[width=0.4\linewidth]{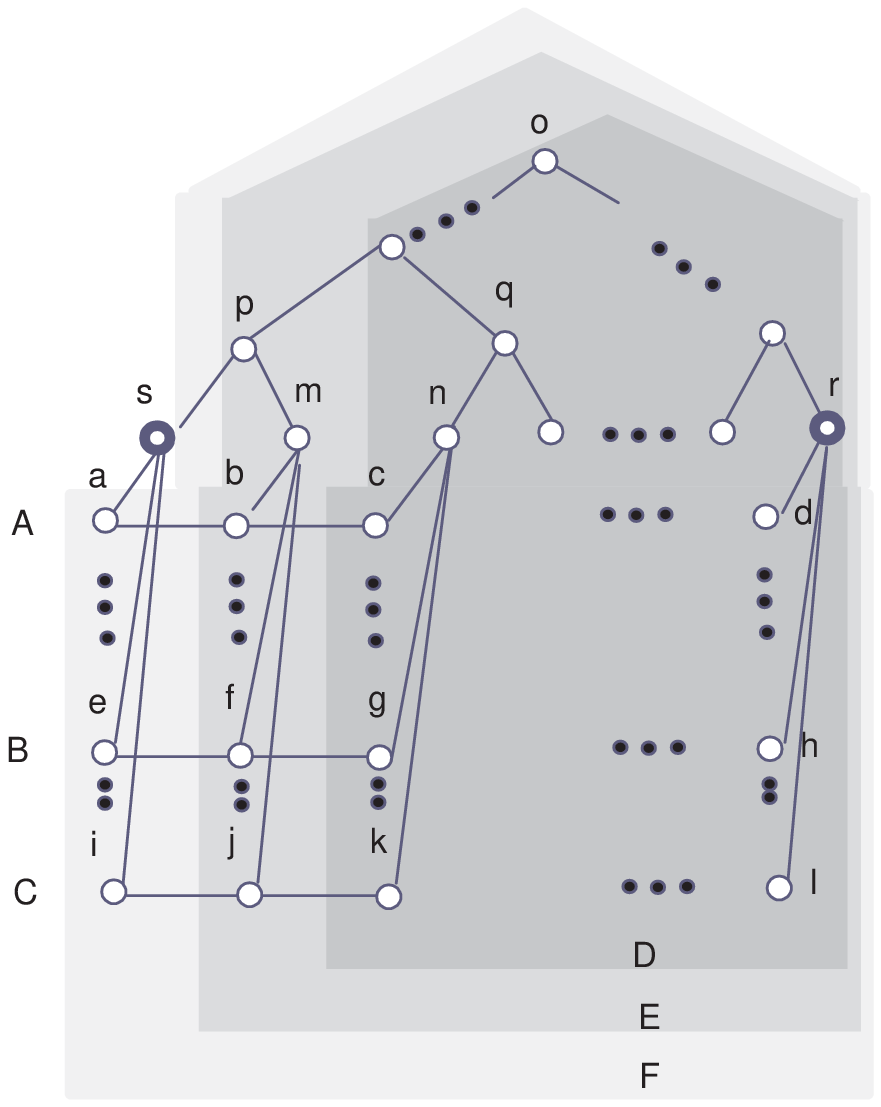}\label{fig:graph}\label{fig:ElkinGraph_RightLeftSet}
    }
  \caption{Examples of $i$-right sets.}
\end{figure}

For a given set of vertices $U=\{v_1, \ldots, v_\ell\}\subseteq V$, a {\em configuration}
%
\[C_\cA(U, t, x, y) = \langle\sigma_\cA(v_1, t, x, y), \ldots, \sigma_\cA(v_\ell, t, x, y)\rangle\]
is a vector of the states of the vertices of $U$ at the end of round $t$ of the execution $\varphi_\cA(x, y)$.
%
%
%

\subsection{Observations}
We note the following crucial observations developed in \cite{PelegR00,Elkin06,LotkerPP06,KorKP10}.
We will need Lemma~\ref{thm:computing_subconfig} to prove Theorem~\ref{thm:cc_to_distributed} in the next subsection.

\begin{observation}\label{obs:crucial_observation}
For any set $U\subseteq U'\subseteq V$, $C_\cA(U, t, x, y)$ can be uniquely determined by $C_\cA(U', t-1, x, y)$ and all messages sent to $U$ from $V\setminus U'$ at time $t$.
\end{observation}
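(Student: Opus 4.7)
The observation is a direct consequence of the local-determination principle highlighted just above its statement: the state of a node at the end of round $t$ is fixed by its state at the end of round $t-1$ together with the messages it receives during round $t$. The plan is to show that, for every $v\in U$, both ingredients are already encoded either in $C_\cA(U', t-1, x, y)$ or in the given messages sent from $V\setminus U'$ to $U$ at round $t$.

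For the first ingredient, since $U\subseteq U'$, the time-$(t-1)$ state $\sigma_\cA(v, t-1, x, y)$ of each $v\in U$ is by definition one of the coordinates of $C_\cA(U', t-1, x, y)$, and so is available.

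For the second ingredient, partition the neighbors of $v$ into those in $U'$ and those in $V\setminus U'$. The messages from neighbors in $V\setminus U'$ at round $t$ are exactly the data we are assumed to be given. For any neighbor $u\in U'$, the round-$t$ message it sends along the edge $(u,v)$ is, by the local-determination property again, a deterministic function of $\sigma_\cA(u, t-1, x, y)$ (plus the identity of the outgoing link), and $\sigma_\cA(u, t-1, x, y)$ is a coordinate of $C_\cA(U', t-1, x, y)$. Hence every round-$t$ message arriving at $v$ can be reconstructed from the two pieces of data in the hypothesis.

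Combining the two, for each $v\in U$ we can evaluate the local transition function to obtain $\sigma_\cA(v, t, x, y)$; collecting these over $v\in U$ yields $C_\cA(U, t, x, y)$ uniquely. The only subtle point — and really the only thing worth verifying carefully — is that a neighbor $u$ of $v\in U$ is indeed either in $U'$ or in $V\setminus U'$, which is immediate since $U'\subseteq V$ partitions the vertex set into $U'$ and its complement; there is no third category to worry about. No obstacle beyond this bookkeeping arises.
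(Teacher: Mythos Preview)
Your proposal is correct and follows essentially the same argument as the paper's own proof: both invoke the local-determination principle to recover each $\sigma_\cA(v,t,x,y)$ from $\sigma_\cA(v,t-1,x,y)$ plus the round-$t$ incoming messages, and both observe that the latter split into messages from $U'$ (computable from $C_\cA(U',t-1,x,y)$) and messages from $V\setminus U'$ (given by hypothesis). Your write-up is slightly more explicit about why the $U'$-originated messages are determined (namely, that the sender's outgoing message is a function of its own prior state), but the content is identical.
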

\begin{proof}
Recall that the state of each vertex $v$ in $U$ can be uniquely determined by its state $\sigma_\cA(v, t-1, x, y)$ at time $t-1$ and the messages sent to it at time $t$. Moreover, the messages sent to $v$ from vertices inside $U'$ can be determined by $C_\cA(U', t-1, x, y)$. Thus if the messages sent from vertices in $V\setminus U'$ are given then we can determine all messages sent to $U$ at time $t$ and thus we can determine $C_\cA(U, t, x, y)$.
\end{proof}

From now on, to simplify notations, when $\cA$, $x$ and $y$ are clear from the context, we use $C_{L_t}$ and $C_{R_t}$ to denote $C_\cA(L_t, t, x, y)$ and $C_\cA(R_t, t, x, y)$, respectively.
The lemma below states that $C_{L_t}$ ($C_{R_t}$, respectively) can be determined by $C_{L_{t-1}}$ ($C_{R_{t-1}}$, respectively) and $dp$ messages generated by some vertices in $R_{t-1}$ ($L_{t-1}$ respectively) at time $t$.
It essentially follows from Observation~\ref{obs:crucial_observation} and an observation that there are at most $d^p$ edges linking between vertices in $V\setminus R_{t-1}$ ($V\setminus L_{t-1}$ respectively) and vertices in $R_t$ ($L_t$ respectively).

\begin{lemma}\label{thm:computing_subconfig}
Fix any deterministic algorithm $\cA$ and input strings $x$ and $y$. For any $0<t<(d^p-1)/2$, there exist functions $g_L$ and $g_R$, $B$-bit messages $M^{L_{t-1}}_1, \ldots, M^{L_{t-1}}_{dp}$ sent by some vertices in $L_{t-1}$ at time $t$, and $B$-bit messages $M^{R_{t-1}}_1$, $\ldots$, $M^{R_{t-1}}_{dp}$  sent by some vertices in $R_{t-1}$ at time $t$ such that
\begin{align}
C_{L_t} &=g_L(C_{L_{t-1}}, M^{R_{t-1}}_1, \ldots, M^{R_{t-1}}_{dp}) \mbox{, and}\label{eq:left}\\
C_{R_t} &=g_R(C_{R_{t-1}}, M^{L_{t-1}}_1, \ldots, M^{L_{t-1}}_{dp})\label{eq:right}\,.
\end{align}
\end{lemma}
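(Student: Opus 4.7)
The plan is to apply Observation~\ref{obs:crucial_observation} twice: once with $U = L_t$, $U' = L_{t-1}$ to establish (\ref{eq:left}), and symmetrically with $U = R_t$, $U' = R_{t-1}$ for (\ref{eq:right}). First I verify $L_t \subseteq L_{t-1}$: in each path $\cP^\ell$, $L_t(\cP^\ell) = \{v_j^\ell : j \leq d^p-1-t\}$ is contained in $L_{t-1}(\cP^\ell) = \{v_j^\ell : j \leq d^p-t\}$, and $L_t(\cT) \subseteq L_{t-1}(\cT)$ by the ancestor-closure construction. Observation~\ref{obs:crucial_observation} then yields a function $g_L$ that recovers $C_{L_t}$ from $C_{L_{t-1}}$ together with all messages sent to $L_t$ from $V \setminus L_{t-1}$ at time $t$. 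Since $t < (d^p-1)/2$, one checks directly from the definitions that $L_{t-1} \cup R_{t-1} = V$, so $V \setminus L_{t-1} \subseteq R_{t-1}$; in particular the incoming messages are generated by some vertices of $R_{t-1}$, as required.

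The central step is then to bound by $dp$ the number of edges with one endpoint in $V \setminus L_{t-1}$ and the other in $L_t$. Path edges do not cross: the rightmost vertex of $L_t(\cP^\ell)$ is $v^\ell_{d^p-1-t}$, whose only right neighbor $v^\ell_{d^p-t}$ still lies in $L_{t-1}$, so the first vertex outside $L_{t-1}$ on the path, $v^\ell_{d^p-t+1}$, is not adjacent to $L_t$. Spoke edges $(u^p_j, v^\ell_j)$ do not cross either: their two endpoints share index $j$, so they always lie on the same side of the $L_t / (V \setminus L_{t-1})$ partition. Hence every crossing edge is a tree edge. At each level $k \in \{0, 1, \ldots, p-1\}$, only the single rightmost node $u^k_{q_k}$ of $L_t(\cT)$, where $q_k := \lfloor (d^p-1-t)/d^{p-k}\rfloor$, can have a child in $V \setminus L_{t-1}(\cT)$: any other parent $u^k_i$ with $i < q_k$ has its subtree of leaf indices bounded above by $(i+1)d^{p-k} - 1 \leq q_k d^{p-k} - 1 \leq d^p-2-t < d^p-t$, so all its descendants lie in $L_{t-1}(\cT)$. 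This boundary node contributes at most $d$ crossing edges, giving a total of at most $dp$ crossing tree edges across the $p$ levels. Each such edge carries a single $B$-bit message at time $t$; padding with empty messages when necessary yields the required $B$-bit strings $M^{R_{t-1}}_1, \ldots, M^{R_{t-1}}_{dp}$, and defining $g_L$ to be the natural one-step simulation of $\cA$ on $L_t$ from $(C_{L_{t-1}}, M^{R_{t-1}}_1, \ldots, M^{R_{t-1}}_{dp})$ establishes (\ref{eq:left}).

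Equation (\ref{eq:right}) follows by exchanging the roles of left and right throughout: the inclusion $R_t \subseteq R_{t-1}$, the containment $V \setminus R_{t-1} \subseteq L_{t-1}$, and the counting of crossing edges are all perfectly symmetric by the structure of $\graph$. I expect the main obstacle to be the index arithmetic at the tree boundary, in verifying that only the rightmost node $u^k_{q_k}$ of $L_t(\cT)$ at each level can see across to $V \setminus L_{t-1}(\cT)$; once that bookkeeping is settled, the rest of the proof is a direct invocation of Observation~\ref{obs:crucial_observation} combined with padding.
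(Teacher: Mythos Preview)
Your proposal is correct and follows essentially the same approach as the paper: apply Observation~\ref{obs:crucial_observation} with $U=L_t$, $U'=L_{t-1}$ (the paper does the symmetric $R$ case), show that path and spoke edges never cross the $L_t$/$(V\setminus L_{t-1})$ boundary, and bound the crossing tree edges by identifying one boundary node per non-leaf level with at most $d$ relevant children. The only cosmetic difference is that you carry out the boundary-node argument via explicit index arithmetic on $q_k=\lfloor (d^p-1-t)/d^{p-k}\rfloor$, whereas the paper argues the same point through a short list of structural observations about neighbors of $R_t$; the content is identical.
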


\begin{proof}
We prove Eq.~\eqref{eq:right} only. (Eq.~\eqref{eq:left} is proved in exactly the same way.) First, observe the following facts about neighbors of nodes in $R_t$.

\begin{itemize}
\item
All neighbors of all path vertices in $R_t$ are in $R_{t-1}$. {\em Example:} In Figure~\ref{fig:ElkinGraph_RightLeftSet}, path vertices in $R_2$ are $v^\ell_2, \ldots, v^\ell_{d^p-1}$ for $\ell=1, \ldots, \Gamma$. Observe that all neighbors of these vertices, i.e. $v^\ell_1, \ldots, v^\ell_{d^p-1}$ for all $\ell$ and $u^p_1, \ldots, u^p_{d^p-1}$, are in $R_1$.

\item All neighbors of all leaf vertices in $V(\cT)\cap R_t$ are in $R_{t-1}$. {\em Example:} In Figure~\ref{fig:ElkinGraph_RightLeftSet} vertices in $R_2$ are $u_2^p, \ldots, u_{d^p-1}^p$. Their neighbors, i.e. $v^\ell_2, \ldots, v^\ell_{d^p-1}$ for all $\ell$ and $u_2^{p-1}, \ldots, u^{p-1}_{d^{p-1}-1}$, are all in $R_1$.

\item For any non-leaf tree vertex $u^\ell_i$, for any $\ell$ and $i$, if $u^\ell_i$ is in $R_{t}$ then its parent and vertices $u^\ell_{i+1}, u^\ell_{i+2}, \ldots, u^\ell_{d^\ell-1}$ are in $R_{t-1}$.
    {\em Example:} In Figure~\ref{fig:ElkinGraph_RightLeftSet}, $u^{p-1}_1$ is in $R_2$. Thus, its parent ($u^{p-2}_0$)
    and $u^{p-1}_2, \ldots, u^{p-1}_{d^{p-1}-1}$ are in $R_1$.


\item For any $i$ and $\ell$, if $u^\ell_{i}$ is in $R_t$ then all children of $u^\ell_{i+1}$ are in $R_t$ (otherwise, all children of $u^\ell_{i}$ are not in $R_t$ and so is $u^\ell_{i}$, a contradiction). {\em Example:}  In Figure~\ref{fig:ElkinGraph_RightLeftSet}, $u^{p-1}_1$ is in $R_2$. Thus, all children of $u^{p-1}_2$ are in $R_2$.
\end{itemize}

Let $u^\ell(R_{t})$ denote the leftmost vertex that is at level $\ell$ of $\cT$ and in $R_t$, i.e., $u^\ell(R_t)=u^\ell_i$ where $i$ is such that $u^\ell_i\in R_t$ and $u^\ell_{i-1}\notin R_t$. (For example, in Figure~\ref{fig:ElkinGraph_RightLeftSet}, $u^{p-1}(R_1)=u_0^{p-1}$ and $u^{p-1}(R_2)=u_1^{p-1}$.)
From the above observations, we conclude that the only neighbors of nodes in $R_t$ that are not in $R_{t-1}$ are children of $u^\ell(R_t)$, for all $\ell$. In other words, all edges linking between vertices in $R_t$ and $V\setminus R_{t-1}$ are in the following form: $(u^\ell(R_t), u')$ for some $\ell$ and child $u'$ of $u^\ell(R_t)$.


Setting $U'=R_{t-1}$ and $U=R_t$ in Observation~\ref{obs:crucial_observation}, we have that $C_{R_t}$ can be uniquely determined by $C_{R_{t-1}}$ and messages sent to $u^\ell(R_t)$ from its children in $V\setminus R_{t-1}$. Note that each of these messages contains at most $B$ bits since they correspond to a message sent on an edge in one round.

Observe further that, for any $t<(d^p-1)/2$, $V\setminus R_{t-1}\subseteq L_{t-1}$ since $L_{t-1}$ and $R_{t-1}$ share some path vertices. Moreover, each $u^\ell(R_t)$ has $d$ children. Therefore, if we let  $M^{L_{t-1}}_1, \ldots, M^{L_{t-1}}_{dp}$ be the messages sent from children of $u^0(R_t), u^1(R_t), \ldots, u^{p-1}(R_t)$ in $V\setminus R_{t-1}$ to their parents (note that if there are less than $dp$ such messages then we add some empty messages) then we can uniquely determine $C_{R_t}$ by $C_{R_{t-1}}$ and  $M^{L_{t-1}}_1, \ldots, M^{L_{t-1}}_{dp}$.  Eq.~\eqref{eq:right} thus follows.
\end{proof}

Using the above lemma, we can now prove Theorem~\ref{thm:cc_to_distributed}.

%
%
\subsection{Proof of the Simulation Theorem (cf. Theorem~\ref{thm:cc_to_distributed})}
Let $f$ be the function in the theorem statement. Let $\mathcal{A}_\epsilon$ be any $\epsilon$-error distributed algorithm for computing $f$ on network $\ElkinGraph$. Fix a random string $\bar{r}$ used by $\mathcal{A}_\epsilon$ (shared by all vertices in $\ElkinGraph$) and consider the {\em deterministic} algorithm $\mathcal{A}$ run on the input of $\mathcal{A}_\epsilon$ and the fixed random string $\bar{r}$.
Let $T_{\mathcal{A}}$ be the worst case running time of algorithm $\mathcal{A}$ (over all inputs).
We note that $T_\mathcal{A}<(d^p-1)/2$, as assumed in the theorem statement.
We show that Alice and Bob, when given $\bar{r}$ as the public random string, can simulate $\mathcal{A}$ using at most $2dpBT_\mathcal{A}$ communication bits, as follows.

Alice and Bob make $T_{\mathcal{A}}$ {\em iterations} of communications.
Initially, Alice computes $C_{L_0}$ which depends only on $x$. Bob also computes $C_{R_0}$ which depends only on $y$. In each iteration $t>0$, we assume that Alice and Bob know $C_{L_{t-1}}$ and $C_{R_{t-1}}$, respectively, before the iteration starts. Then, Alice and Bob will exchange at most $2dpB$ bits so that Alice and Bob know $C_{L_{t}}$ and $C_{R_{t}}$, respectively, at the end of the iteration.

To do this, Alice sends to Bob the messages $M^{L_{t-1}}_1, \ldots, M^{L_{t-1}}_{dp}$ as in Lemma~\ref{thm:computing_subconfig}. Alice can generate these messages since she knows $C_{L_{t-1}}$ (by assumption). Then, Bob can compute $C_{R_t}$ using Eq.~\eqref{eq:right} in Lemma~\ref{thm:computing_subconfig}. Similarly, Bob sends $dp$ messages to Alice and Alice can compute $C_{L_t}$. They exchange at most $2dpB$ bits in total in each iteration since there are $2dp$ messages, each of $B$ bits, exchanged.

After $T_\mathcal{A}$ iterations, Alice knows $C(L_{T_\mathcal{A}}, T_\mathcal{A}, x, y)$ and Bob knows $C(R_{T_\mathcal{A}}, T_\mathcal{A}, x, y)$. In particular, they know the output of $\mathcal{A}$ (output by $s$ and $r$) since Alice and Bob know the states of $s$ and $r$, respectively, after $\cA$ terminates. They can thus output the output of $\mathcal{A}$.

Since Alice and Bob output exactly the output of $A$, they will answer correctly if and only if $\cA$ answers correctly. Thus, if $\cA$ is $\epsilon$-error then so is the above communication protocol between Alice and Bob.
Moreover, Alice and Bob communicate at most $2dpBT_\mathcal{A}$ bits. The theorem follows.
\section{Distributed Verification of Set Disjointness and Equality Functions}\label{sec:function_verification}

In this section, we show lower bounds of distributed algorithms for verifying set disjointness and equality. The definitions of both problems can be found in Section~\ref{subsec:compcomp} and the model of distributed verification of functions can be found in Section~\ref{subsec:func_verification_def}.  The results in this section are simple corollaries of the Simulation Theorem (cf. Theorem~\ref{thm:cc_to_distributed}) and will serve as important building blocks in showing lower bounds in later sections.

\subsection{Randomized Lower Bound of Set Disjointness Function}
To prove the lower bound of verifying $\DISJfunc$, we simply use the communication complexity lower bound of computing $\DISJfunc$ \cite{BabaiFS86,KalyanasundaramS92,Bar-YossefJKS04,Razborov92}, i.e., $R^{cc-pub}_\epsilon(\DISJfunc)=\Omega(b)$ where $b$ is the size of input strings $x$ and $y$.

\begin{lemma}\label{lem:disj}
For any $\Gamma, d, p$, there exists a constant $\epsilon>0$ such that
$$R_\epsilon^{\ElkinGraph}(\DISJfunc)=\Omega(\min(d^p, \frac{b}{dpB})),$$
where $b$ is the size of input strings $x$ and $y$ of $\DISJfunc$; i.e., any $\epsilon$-error algorithm computing function $\DISJfunc$ on $\ElkinGraph$ requires $\Omega(\min(d^p, \frac{b}{dpB}))$ time.
\end{lemma}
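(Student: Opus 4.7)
The plan is to apply the Simulation Theorem (Theorem~\ref{thm:cc_to_distributed}) as a black box, combined with the known randomized communication complexity lower bound $R^{cc-pub}_\epsilon(\DISJfunc)=\Omega(b)$ for set disjointness on $b$-bit inputs~\cite{BabaiFS86,KalyanasundaramS92,Bar-YossefJKS04,Razborov92}. Let $T = R_\epsilon^{\ElkinGraph}(\DISJfunc)$ denote the worst-case running time of the best $\epsilon$-error distributed algorithm computing $\DISJfunc$ on $\ElkinGraph$, where $s$ receives $x$ and $r$ receives $y$. The goal is to show $T = \Omega(\min(d^p, b/(dpB)))$.

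I would split into two cases according to whether the hypothesis of the Simulation Theorem is satisfied. \emph{Case 1:} If $T \geq (d^p-1)/2$, then trivially $T = \Omega(d^p)$, which already gives the bound. \emph{Case 2:} If $T < (d^p-1)/2$, then Theorem~\ref{thm:cc_to_distributed} applies with $f=\DISJfunc$, yielding an $\epsilon$-error communication complexity protocol between Alice and Bob for $\DISJfunc$ that uses at most $2dpB\,T$ bits. By the randomized communication complexity lower bound for set disjointness, we must have $2dpB\,T \geq R_\epsilon^{cc-pub}(\DISJfunc) = \Omega(b)$, which rearranges to $T = \Omega(b/(dpB))$.

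Combining the two cases, $T = \Omega(\min(d^p, b/(dpB)))$, as desired. The only subtlety is that the communication complexity lower bound $\Omega(b)$ for $\DISJfunc$ holds for some fixed constant error probability $\epsilon>0$ (rather than for all $\epsilon<1/2$), so one must quote the bound for that particular constant, and the same $\epsilon$ is then the constant in the statement of the lemma. There is no genuine obstacle here: the whole argument is a two-line reduction, because the heavy lifting (both the simulation construction and the randomized communication lower bound for disjointness) has already been established elsewhere.
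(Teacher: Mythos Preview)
Your proof is correct and follows essentially the same approach as the paper: a two-case split on whether $T\ge (d^p-1)/2$, invoking the Simulation Theorem in the second case and the $\Omega(b)$ communication lower bound for $\DISJfunc$. Your remark about the constant $\epsilon$ matches the paper's phrasing ``for some $\epsilon>0$'' as well.
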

\begin{proof}
If $R_\epsilon^{\ElkinGraph}(\DISJfunc)\geq (d^p-1)/2$
then $R_\epsilon^{\ElkinGraph}(\DISJfunc)=\Omega(d^p)$ and we are done. Otherwise the conditions of Theorem~\ref{thm:cc_to_distributed} are fulfilled and it implies that $R^{cc-pub}_\epsilon(\DISJfunc)\leq 2dpB\cdot R_\epsilon^{\ElkinGraph}(\DISJfunc)$. Now we use the fact that $R^{cc-pub}_\epsilon(\DISJfunc)$ $=\Omega(b)$ for the function $\DISJfunc$ on $b$-bit inputs, for some $\epsilon>0$ \cite{BabaiFS86,KalyanasundaramS92,Bar-YossefJKS04,Razborov92} (also see \cite[Example 3.22]{KNbook} and references therein). It follows that $R_\epsilon^{\ElkinGraph}(\DISJfunc)=\Omega(b/(dpB))$.
\end{proof}

\subsection{Deterministic Lower Bound of Equality Function}
To prove the lower bound of verifying $\EQfunc$, we simply use the deterministic communication complexity lower bound of computing $\EQfunc$ \cite{Yao79-cc}, i.e., $R^{cc-pub}_0(\EQfunc)=\Omega(b)$ where $b$ is the size of input strings $x$ and $y$ (see, e.g., \cite[Example 1.21]{KNbook} and references therein).

\begin{lemma}\label{lem:eq}
For any $\Gamma, d, p$,
$$R_0^{\ElkinGraph}(\EQfunc)=\Omega(\min(d^p, \frac{b}{dpB})),$$
where $b$ is the size of input strings $x$ and $y$ of $\EQfunc$; i.e., any deterministic algorithm computing function $\EQfunc$ on $\ElkinGraph$ requires $\Omega(\min(d^p, \frac{b}{dpB}))$ time.
\end{lemma}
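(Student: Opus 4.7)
My plan is to mirror exactly the proof of Lemma~\ref{lem:disj}, substituting the deterministic communication complexity bound for $\EQfunc$ in place of the randomized one for $\DISJfunc$. The Simulation Theorem (Theorem~\ref{thm:cc_to_distributed}) is stated for arbitrary $\epsilon \geq 0$, so in particular it applies with $\epsilon = 0$, i.e., to deterministic algorithms. This is the only conceptual ingredient needed.

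I would proceed by case analysis on $R_0^{\ElkinGraph}(\EQfunc)$. First, if $R_0^{\ElkinGraph}(\EQfunc) \geq (d^p - 1)/2$, then trivially $R_0^{\ElkinGraph}(\EQfunc) = \Omega(d^p)$ and the lemma holds. Otherwise the hypothesis of Theorem~\ref{thm:cc_to_distributed} is satisfied with $\epsilon = 0$, which yields
\[
R_0^{cc\text{-}pub}(\EQfunc) \leq 2dpB \cdot R_0^{\ElkinGraph}(\EQfunc).
\]
Then I would invoke the classical result that the deterministic communication complexity of equality on $b$-bit strings satisfies $R_0^{cc\text{-}pub}(\EQfunc) = \Omega(b)$ (a standard fooling-set argument due to Yao~\cite{Yao79-cc}; see also \cite[Example 1.21]{KNbook}). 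Rearranging gives $R_0^{\ElkinGraph}(\EQfunc) = \Omega(b/(dpB))$, and combining the two cases yields the claimed $\Omega(\min(d^p, b/(dpB)))$ bound.

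There is essentially no obstacle: the content of the argument is already encapsulated in the Simulation Theorem and in the known communication complexity lower bound for equality. The only point worth checking carefully is that Theorem~\ref{thm:cc_to_distributed} as stated applies with $\epsilon = 0$ and yields a deterministic simulation when the starting distributed algorithm is deterministic (which it does, since fixing the shared random string in the simulation reduces to the deterministic case and in fact the statement explicitly allows $\epsilon \geq 0$). Everything else is just substitution into the same two-line argument used for $\DISJfunc$.
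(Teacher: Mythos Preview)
Your proposal is correct and matches the paper's proof essentially line for line: the same case split on whether $R_0^{\ElkinGraph}(\EQfunc)\geq (d^p-1)/2$, the same application of Theorem~\ref{thm:cc_to_distributed} with $\epsilon=0$, and the same appeal to the $\Omega(b)$ deterministic communication lower bound for equality.
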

\begin{proof}
If $R_0^{\ElkinGraph}(\EQfunc)\geq (d^p-1)/2$
then $R_0^{\ElkinGraph}(\EQfunc)=\Omega(d^p)$ and we are done. Otherwise, the conditions of Theorem~\ref{thm:cc_to_distributed} are fulfilled and it implies that $R^{cc-pub}_0(\EQfunc)\leq 2dpB\cdot R_0^{\ElkinGraph}(\EQfunc)$. Now we use the fact that $R^{cc-pub}_0(\EQfunc)$ $=\Omega(b)$ for the function $\EQfunc$ on $b$-bit inputs. It follows that $R_\epsilon^{\ElkinGraph}(\EQfunc)=\Omega(b/(dpB))$.
\end{proof}

\section{Randomized Lower Bounds for Distributed Verification}\label{sec:randomized_lb}

In this section, we present randomized lower bounds for many verification problems on graphs of various diameters, as shown in Figure~\ref{fig:lower bounds on various diameters}. These problems are defined in Section~\ref{subsec:verification_def}. The key ingredient is the lower bound of verifying the set disjointness function on distributed networks (cf. Lemma~\ref{lem:disj}).
The general theorem is as follows.
%

\begin{theorem}\label{thm:all_verification}
For any $p\geq 1$, $B\geq 1$, and $n\in \{2^{2p+1}pB, 3^{2p+1}pB, \ldots\}$, there exists a constant $\epsilon>0$ such that any $\epsilon$-error distributed algorithm for any of the following problems requires $\Omega((n/(pB))^{\frac{1}{2}-\frac{1}{2(2p+1)}})$ time on some $\Theta(n)$-vertex graph of diameter $2p+2$ in the $B$ model: Spanning connected subgraph, cycle containment, $e$-cycle containment, bipartiteness, $s$-$t$ connectivity, connectivity, cut, edge on all paths, $s$-$t$ cut and least-element list.
\end{theorem}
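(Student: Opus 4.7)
My plan is to prove Theorem~\ref{thm:all_verification} by reducing, for each listed predicate $\Pi$, the set-disjointness function $\DISJfunc$ on the graph $\graph$ to the verification of $\Pi$, and then invoking the distributed hardness of $\DISJfunc$ established in Lemma~\ref{lem:disj} (which itself rests on the Simulation Theorem~\ref{thm:cc_to_distributed}). The restriction $n \in \{2^{2p+1}pB,\, 3^{2p+1}pB,\dots\}$ in the theorem statement is tailor-made so that the choice $d := (n/(pB))^{1/(2p+1)}$ is a positive integer $\geq 2$; with $\Gamma := n/d^p = d^{p+1}pB$ and disjointness input length $b := \Gamma$, Lemma~\ref{lem:size} guarantees that $\graph$ has $\Theta(n)$ vertices and diameter exactly $2p+2$.

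With these parameters, Lemma~\ref{lem:disj} yields
\[
R_\epsilon^{\graph}(\DISJfunc) \;=\; \Omega\!\left(\min\!\left(d^p,\, \tfrac{b}{dpB}\right)\right) \;=\; \Omega(d^p) \;=\; \Omega\!\left((n/(pB))^{p/(2p+1)}\right),
\]
because the balance $b/(dpB)=\Gamma/(dpB)=d^p$ equalizes both terms, and $\tfrac{p}{2p+1}=\tfrac12-\tfrac{1}{2(2p+1)}$ is precisely the exponent advertised by the theorem.

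For each predicate $\Pi$ I would give a mapping $(x,y)\mapsto H_{x,y}\subseteq E(\graph)$ such that $H_{x,y}$ satisfies $\Pi$ iff $\DISJfunc(x,y)$ takes a prescribed value, arranged so that every vertex learns its incident edges of $H_{x,y}$ after one preprocessing round (namely, $s$ tells each of its $\Gamma$ spoke-neighbors the corresponding bit of $x$, and symmetrically $r$ broadcasts the bits of $y$). The prototype is the $s$-$t$ connectivity reduction with $\tilde s=s,\ \tilde t=r$, and $H_{x,y}$ consisting of every path edge of $\graph$ together with the spoke $(s,v_0^\ell)$ iff $x_\ell=1$ and the spoke $(r,v_{d^p-1}^\ell)$ iff $y_\ell=1$: since $H_{x,y}$ contains no tree edge, $\tilde s$ and $\tilde t$ share a component iff some $\cP^\ell$ is attached to both endpoints, i.e., iff $\DISJfunc(x,y)=0$. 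The remaining problems branch from this prototype via local edits: $s$-$t$ cut by complementing the chosen edge set; least-element list by giving $s$ rank $0$ and using $r$'s list as the query; $e$-cycle and edge-on-all-paths by fixing a pre-installed $s$-$r$ edge/path so that an $s$-$r$-path through $H_{x,y}$ creates a cycle; cycle containment and bipartiteness by attaching a short parity-controlled gadget near $s,r$; and the global predicates (spanning connected subgraph, connectivity, cut) by laying down a ``backbone'' of all tree edges, all middle spokes, and all path edges (so that $V\setminus\{s,r\}$ is always covered and connected), while using a small anchor/complementation trick at $s,r$ to force the global property to track $\DISJfunc(x,y)$ rather than the easier predicate $(x\neq\mathbf{0})\wedge (y\neq\mathbf{0})$. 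Given any $\epsilon$-error $T$-round verifier for $\Pi$, Alice and Bob compute $\DISJfunc(x,y)$ by distributing the input bits in $O(1)$ rounds, then running the verifier on $H_{x,y}$ for $T$ rounds and reading off the answer bit at $s$; hence $T+O(1)\geq R_\epsilon^{\graph}(\DISJfunc)=\Omega(d^p)$, the stated bound.

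The parameter computation, the $s$-$t$ connectivity gadget, and the execution of the reduction are routine; the real technical work lies in the global gadgets. A naive construction for a global predicate such as connectivity or spanning-connected subgraph typically produces the correspondence $\Pi(H_{x,y})\Leftrightarrow (x\neq\mathbf{0})\wedge(y\neq\mathbf{0})$, which is not $\DISJfunc$. Avoiding this collapse requires placing the $x,y$-controlled edges where their toggling genuinely changes the predicate (e.g., on two parallel $s$-$r$ paths, or via a chosen anchor vertex) while ensuring that the backbone does not leave behind isolated vertices, extraneous cycles, or unintended parity witnesses. This per-predicate combinatorial bookkeeping is the main obstacle, but each case is a short verification once the correct gadget is chosen.
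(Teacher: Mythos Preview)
Your plan follows the paper's architecture exactly: fix $d=(n/(pB))^{1/(2p+1)}$, $\Gamma=d^{p+1}pB=b$, reduce each predicate on $\graph$ to $\DISJfunc$, and invoke Lemma~\ref{lem:disj}; the parameter balancing and the $s$-$t$ connectivity gadget match the paper verbatim.

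Where you diverge is in the ``global'' predicates (spanning connected subgraph, connectivity, cut), which you single out as the technical heart and for which you sketch a backbone with all middle spokes plus an unspecified anchor/complementation trick. In the paper these are actually the cleanest cases, handled by a single bit-flip rather than any new gadget. For spanning connected subgraph, $H$ consists of all tree edges, all path edges, \emph{no} middle spokes, and the spoke $(s,v_0^\ell)$ iff $x_\ell=0$ (not $1$), likewise $(r,v_{d^p-1}^\ell)$ iff $y_\ell=0$. Each $\cP^\ell$ then attaches to the tree iff $x_\ell=0$ or $y_\ell=0$, so $H$ is spanning and connected iff $\DISJfunc(x,y)=1$; the collapse to $(x\neq\mathbf{0})\wedge(y\neq\mathbf{0})$ you worry about never arises. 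The paper then derives connectivity and cut by one-line reductions \emph{from} spanning connected subgraph (local degree check plus connectivity; take the edge-complement $\bar H$), and edge-on-all-paths from $e$-cycle, rather than building fresh $\DISJfunc$ gadgets for each. Cycle containment is just the $s$-$t$ gadget with the tree edges reinstated (the only possible cycles are $s$--$\cP^\ell$--$r$--tree--$s$), and bipartiteness is handled by subdividing the tree edge at $s$ with one virtual vertex to force the parity of that cycle to be odd. So your proposal is correct, but the part you flag as hardest is in fact a two-line construction once you flip the spoke condition.
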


In particular, for graphs with diameter $D=4$, we get $\Omega((n/B)^{1/3})$ lower bound and for graphs with diameter $D=\log{n}$ we get $\Omega(\sqrt{n/(B\log n)})$. Similar analysis also leads to a $\Omega(\sqrt{n/B})$ lower bound for graphs of diameter $n^\delta$ for any $\delta>0$, and $\Omega((n/B)^{1/4})$ lower bound for graphs of diameter three using the same analysis as in \cite{Elkin06}.
We note again that the lower bound holds even in the public coin model where every vertex shares a random string.

\paragraph{\bf Organization} This section is organized as follows. In the first three subsections, we show lower bounds that need a reduction from the set disjointness problem (i.e., problems in the third column in Figure~\ref{fig:all_reductions}): spanning connected subgraph verification in Subsection~\ref{subsec:spanconn_veri}, $s$-$t$ connectivity verification in Subsection~\ref{subsec:stconn_veri} and cycle containment, $e$-cycle containment, and bipartiteness verification in Subsection~\ref{subsec:cycle_veri} (these problems are proved together as they use the same construction). The lower bounds on the remaining problems (connectivity, cut, edges on all paths, $s$-$t$ cut and least-element list verification) are in Subsection \ref{sec:rand reduction from other problems}.

\subsection{Lower Bound of Spanning Connected Subgraph Verification Problem}\label{subsec:spanconn_veri}
The lower bound of spanning connected subgraph verification essentially follows from the following lemma which says that an algorithm for solving spanning connected subgraph verification can be used to compute $\DISJfunc$ as well.

\begin{lemma}\label{lem:spanning_connected}
For any $\Gamma$, $d\geq 2$, $p$ and $\epsilon\geq 0$, if there exists an $\epsilon$-error distributed algorithm for the spanning connected subgraph verification problem on graph $\ElkinGraph$ then there exists an $\epsilon$-error algorithm for verifying $\DISJfunc$ (on $\Gamma$-bit inputs) on $\ElkinGraph$ that uses the same time complexity.
\end{lemma}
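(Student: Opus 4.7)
The plan is to encode a disjointness instance $(x,y)\in\{0,1\}^\Gamma\times\{0,1\}^\Gamma$ as a subgraph $H=H(x,y)$ of $\graph$ whose ``connected spanning subgraph'' status exactly reflects whether the two input sets are disjoint, and then to run the given verification algorithm on $H$ as a black box.

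I would use the following $H$: put every edge of the tree $\cT$ into $H$, put every path edge $(v_j^\ell,v_{j+1}^\ell)$ into $H$ for all $\ell$ and all $0\le j\le d^p-2$, exclude every interior spoke $(u_j^p,v_j^\ell)$ with $0<j<d^p-1$, and finally place the two ``endpoint'' spokes per path under the control of the inputs:
\[
(s,v_0^\ell)\in H \iff x_\ell=0,\qquad (r,v_{d^p-1}^\ell)\in H \iff y_\ell=0.
\]
Since every vertex of $\graph$ is incident to a tree edge or a path edge that is in $H$, the spanning condition always holds. The tree $\cT$ is connected in $H$, and each path $\cP^\ell$ is connected in $H$, so the only question is whether $\cP^\ell$ is attached to the tree; the only spokes that could provide such an attachment are $(s,v_0^\ell)$ and $(r,v_{d^p-1}^\ell)$. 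Hence $\cP^\ell$ is linked to $\cT$ in $H$ iff $x_\ell=0$ or $y_\ell=0$, i.e.\ iff $(x_\ell,y_\ell)\neq(1,1)$. Therefore $H$ is connected iff this holds for all $\ell$, which is precisely $\DISJfunc(x,y)=1$.

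Given an $\epsilon$-error verification algorithm $\cA$, I would build an $\epsilon$-error algorithm $\cA'$ for $\DISJfunc$ on $\graph$ as follows. Every vertex except $s,r,v_0^\ell,v_{d^p-1}^\ell$ has indicator variables that are constants determined by the fixed part of the construction and so needs no input. Node $s$ (resp.\ $r$) knows $x$ (resp.\ $y$) and so knows all its incident indicators directly. The only mismatches are at $v_0^\ell$ and $v_{d^p-1}^\ell$, which are neighbors of $s$ and $r$ respectively; one round suffices for $s$ to send $x_\ell$ along the spoke to $v_0^\ell$ (one bit) and $r$ to send $y_\ell$ along the spoke to $v_{d^p-1}^\ell$. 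After this preliminary exchange all nodes know their indicator variables consistently, so $\cA$ can be simulated verbatim; when $\cA$ halts, every node (including $s$ and $r$) knows whether $H$ satisfies the predicate, and therefore knows $\DISJfunc(x,y)$. The one-round setup is absorbed into the time bound.

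The only real work is designing the ``AND gadget'' encoded by each path $\cP^\ell$, which must disconnect from the rest of $H$ precisely when both endpoint spokes are removed, i.e.\ when $x_\ell\wedge y_\ell=1$; the construction above does this by forbidding all interior spokes so that the two endpoint spokes become the sole bridges between the path and the tree. Everything else is a routine local bookkeeping argument, and the distributed encoding is essentially free because the two input-carrying vertices are already adjacent (along spokes) to the vertices that need to learn the corresponding input bit.
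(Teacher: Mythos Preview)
Your construction and argument are essentially identical to the paper's: include all tree and path edges in $H$, exclude all interior spokes, and let the endpoint spokes $(s,v_0^\ell)$ and $(r,v_{d^p-1}^\ell)$ be in $H$ iff $x_\ell=0$ and $y_\ell=0$ respectively, so that $H$ is a spanning connected subgraph iff $\DISJfunc(x,y)=1$. The one-round setup for $s$ and $r$ to inform their neighbors is also handled the same way in the paper.
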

\begin{proof}
Consider an $\epsilon$-error algorithm $\cA$ for the spanning connected subgraph verification problem, and suppose
that we are given an instance of the set disjointness problem with $\Gamma$-bit input strings $x$ and $y$ (given to $s$ and $r$). We use $\cA$
to solve this instance of the set disjointness problem by constructing $H$ as follows.

First, we mark all path edges and tree edges as participating in $H$. All spoke edges are marked as not participating in subgraph  $H$, except those incident to $s$ and $r$ for which we do the following:
For each bit $x_i$, $1 \le i \le \Gamma$, vertex $s$ indicates that the spoke edge $(s, v_0^i)$ participates in $H$ if and only if $x_i = 0$. Similarly, for each bit $y_i$, $1 \le i \le \Gamma$, vertex $r$ indicates that the spoke edge $(r, v_{d^p-1}^i)$ participates in $H$ if and only if $y_i = 0$. (See Figure~\ref{fig:connected spanning}.)


\begin{figure}
  \centering
     {
    \psfrag{A}[c]{$\cP^1$}
    \psfrag{B}[c]{$\cP^{\Gamma-1}$}
    \psfrag{C}[c]{$\cP^{\Gamma}$}
    \psfrag{a}[c]{$v_0^1$}
    \psfrag{b}[c]{$v_1^1$}
    \psfrag{c}[c]{$v_2^1$}
    \psfrag{d}{$v_{d^p-1}^1$}
    \psfrag{e}[c]{$v_0^\ell$}
    \psfrag{f}[c]{$v_1^\ell$}
    \psfrag{g}[c]{$v_2^\ell$}
    \psfrag{h}{$v_{d^p-1}^\ell$}
    \psfrag{i}[c]{$v_0^{\Gamma}$}
    \psfrag{j}[c]{$v_1^{\Gamma}$}
    \psfrag{k}[c]{$v_2^{\Gamma}$}
    \psfrag{l}{$v_{d^p-1}^{\Gamma}$}
    \psfrag{s}[r]{$s=u^p_0$}
    \psfrag{r}{$u^p_{d^p-1}=r$}
    \psfrag{m}[c]{$u^p_{1}$}
    \psfrag{n}[c]{$u^p_{2}$}
    \psfrag{o}{$u^0_0$}
    \psfrag{p}{$u^{p-1}_0$}
    \psfrag{q}{$u^{p-1}_1$}
    \psfrag{D}{$R_2$}
    \psfrag{E}{$R_1$}
    \psfrag{F}{$R_0$}
    \includegraphics[height=0.5\linewidth]{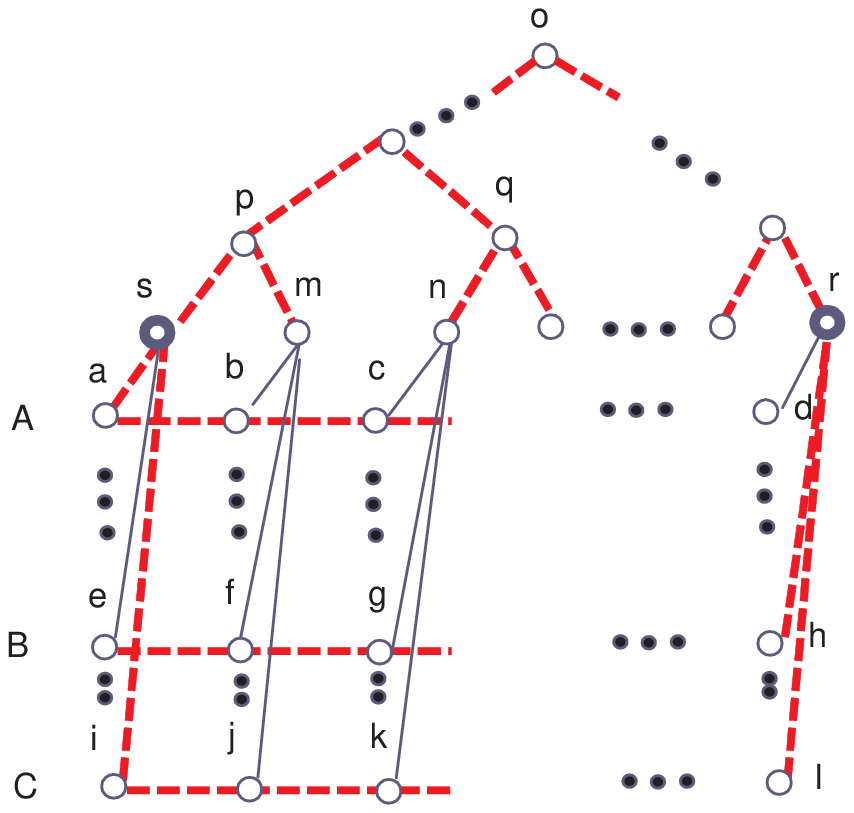}
   }
  \caption{Example of $H$ for the spanning connected subgraph problem (marked with dashed edges (red edges)) when $x=0...10 $ and $y=1...00$.}\vspace*{-\figspace ex}\label{fig:connected spanning}
\end{figure}

Note that the participation of all edges, except those incident to $s$ and $r$, is decided independently of
the input. Moreover, one round is sufficient for $s$ and $r$ to inform their neighbors of the participation of edges incident to them. Hence, one round is enough to construct $H$. Then, algorithm $\cA$ is started.

Once algorithm $\cA$ terminates, vertex $r$
determines its output for the set disjointness problem by stating that both
input strings are disjoint if and only if the spanning connected subgraph verification algorithm verified that
the given subgraph $H$ is indeed a spanning connected subgraph.

Observe that $H$ is a spanning connected subgraph if and only if for all $1\leq i\leq \Gamma$ at least one of the edges $(s,v_0^i)$ and $(r,v_{d^p-1}^i)$ is in $H$; thus, by the construction of $H$,
$H$ is a spanning connected subgraph if and only if the input strings $x, y$ are disjoint, i.e., for every $i$ either $x_i=0$ or $y_i=0$. Hence the resulting algorithm has correctly solved the given instance of the set disjointness  problem when $\cA$ correctly solve the spanning connected subgraph verification problem on the constructed subgraph $H$. This happens with probability at least $1-\epsilon$.
\end{proof}

Using Lemma~\ref{lem:disj}, we obtain the following result.

\begin{corollary}
For any $\Gamma, d, p$, there exists a constant $\epsilon>0$ such that any $\epsilon$-error algorithm for the spanning connected subgraph verification problem requires $\Omega(\min(d^p, \frac{\Gamma}{dpB}))$ time on some $\Theta(\Gamma d^p)$-vertex graph of diameter $2p+2$.
\end{corollary}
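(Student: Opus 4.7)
The plan is to chain together the two main results established just above the corollary statement: the reduction in Lemma~\ref{lem:spanning_connected} and the set-disjointness lower bound in Lemma~\ref{lem:disj}, together with the size/diameter bookkeeping from Lemma~\ref{lem:size}.

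First I would fix the host graph to be $\graph$ for the given parameters $\Gamma, d, p$. By Lemma~\ref{lem:size}, this graph has $n = \Theta(\Gamma d^p)$ vertices and diameter $2p+2$, so it is already of the type demanded by the corollary. Next, suppose for contradiction that there is an $\epsilon$-error distributed algorithm $\cA$ for the spanning connected subgraph verification problem on $\graph$ running in time $T = o(\min(d^p, \Gamma/(dpB)))$. By Lemma~\ref{lem:spanning_connected}, $\cA$ can be converted, with only $O(1)$ additional rounds of overhead (one round to let $s$ and $r$ announce the status of their incident spoke edges encoding their inputs $x,y \in \{0,1\}^{\Gamma}$), into an $\epsilon$-error distributed algorithm $\cA'$ that computes $\DISJfunc$ on $\Gamma$-bit inputs on the same graph $\graph$, in time $T + O(1)$.

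Then I would invoke Lemma~\ref{lem:disj}, which gives a constant $\epsilon > 0$ for which any $\epsilon$-error algorithm computing $\DISJfunc$ on $\graph$ with input size $\Gamma$ must use $\Omega(\min(d^p, \Gamma/(dpB)))$ rounds. Taking this same constant $\epsilon$, the algorithm $\cA'$ violates this lower bound, a contradiction. Hence $\cA$ itself must require $\Omega(\min(d^p, \Gamma/(dpB)))$ rounds, proving the corollary.

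I do not anticipate a real obstacle: all the work has already been done in the preceding two lemmas, and the corollary is essentially their composition on the family $\graph$. The only small care-point is that the reduction of Lemma~\ref{lem:spanning_connected} is stated for $d \ge 2$, and that the parameter relationship $n = \Theta(\Gamma d^p)$ and diameter $2p+2$ must come from Lemma~\ref{lem:size}; both are immediate from the statements as given.
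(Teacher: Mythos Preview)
Your proposal is correct and matches the paper's approach exactly: the corollary is stated in the paper as an immediate consequence of Lemma~\ref{lem:spanning_connected} combined with Lemma~\ref{lem:disj} (with Lemma~\ref{lem:size} supplying the vertex count and diameter), and that is precisely the composition you have written out.
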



In particular, if we consider $\Gamma=d^{p+1}pB$ then $\Omega(\min(d^p, \Gamma/(dpB))) = \Omega(d^p)$. Moreover, by  Lemma~\ref{lem:size}, $G(d^{p+1}pB, d, p)$ has $n=\Theta(d^{2p+1}pB)$ vertices and thus the lower bound of $\Omega(d^p)$ becomes $\Omega((n/(pB))^{\frac{1}{2}-\frac{1}{2(2p+1)}})$. Theorem~\ref{thm:all_verification} (for the case of spanning connected subgraph) follows.

\subsection{Lower Bound of $s$-$t$ Connectivity Verification Problem}\label{sec:st_lowerbound}\label{subsec:stconn_veri}
We again modify the proof of Lemma~\ref{lem:spanning_connected} to prove the following lemma.


\begin{lemma}\label{lem:stcon}
For any $\Gamma$, $d\geq 2$, $p$ and $\epsilon\geq 0$ if there exists an $\epsilon$-error distributed algorithm for the $s$-$t$ connectivity verification problem on graph $\graph$ then there exists an $\epsilon$-error algorithm for verifying $\DISJfunc$ (on $\Gamma$-bit inputs) on $\ElkinGraph$ that uses the same time complexity.
\end{lemma}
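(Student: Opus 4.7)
The plan is to mimic the reduction used in Lemma~\ref{lem:spanning_connected}, but tailored so that $s$ and $r$ become connected in $H$ exactly when the two input strings intersect. Given an $\epsilon$-error algorithm $\cA$ for $s$-$t$ connectivity verification on $\graph$, and an instance $(x,y)\in\{0,1\}^\Gamma\times\{0,1\}^\Gamma$ of set disjointness given to $s$ and $r$, designate the two special vertices of $\graph$ as the source-sink pair of the $s$-$t$ connectivity instance (i.e., set $t:=r$). I would construct $H$ locally in a single communication round as follows: every path edge of every $\cP^\ell$ is placed in $H$, no tree edge of $\cT$ is placed in $H$, and no spoke is placed in $H$ except those incident to $s$ or $r$. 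Vertex $s$ declares that the spoke $(s,v_0^i)$ belongs to $H$ iff $x_i=1$, and vertex $r$ declares that the spoke $(r,v_{d^p-1}^i)$ belongs to $H$ iff $y_i=1$. One round suffices since each endpoint decides locally, based on its own input, which of its incident spokes to include.

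Now I would verify the key correctness claim: $s$ and $r$ lie in the same component of $H$ iff $\DISJfunc(x,y)=0$. Since tree edges are excluded, the only way to travel from $s$ to $r$ in $H$ is via some path $\cP^i$: we must use the spoke $(s,v_0^i)$, traverse $\cP^i$ (all of whose edges are in $H$), and exit via $(r,v_{d^p-1}^i)$. Such an $i$ exists iff there is some index with $x_i=y_i=1$, i.e., iff the two input sets are not disjoint. After $\cA$ terminates, both $s$ and $r$ know whether $H$ satisfies $s$-$t$ connectivity, so each of them outputs ``not disjoint'' iff $\cA$ reports that $s$ and $r$ are connected, and ``disjoint'' otherwise; this succeeds with probability at least $1-\epsilon$ on any fixed input, giving an $\epsilon$-error algorithm for $\DISJfunc$ on $\Gamma$-bit inputs with the same running time as $\cA$ (up to the additive single round used to set up $H$, which can be absorbed into the asymptotic bound, or simply noted as a lower-order term as in Lemma~\ref{lem:spanning_connected}).

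There is essentially no technical obstacle here; the only point requiring a moment of care is ensuring that excluding the tree edges really forces the only $s$-to-$r$ routes in $H$ to go through a single $\cP^i$ end-to-end — and this is immediate because without tree edges no $v_j^\ell$ can reach any $v_{j'}^{\ell'}$ with $\ell'\ne\ell$. Combined with Lemma~\ref{lem:disj}, this lemma then yields the $\Omega(\min(d^p,\Gamma/(dpB)))$ lower bound for $s$-$t$ connectivity verification on $\graph$, matching the bound stated in Theorem~\ref{thm:all_verification}.
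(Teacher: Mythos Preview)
Your proposal is correct and matches the paper's proof essentially verbatim: the paper likewise includes all path edges, excludes all tree edges, and has $s$ (resp.\ $r$) include the spoke $(s,v_0^i)$ (resp.\ $(r,v_{d^p-1}^i)$) iff $x_i=1$ (resp.\ $y_i=1$), then observes that $s$ and $r$ are connected in $H$ iff the inputs are not disjoint. Your additional remark that excluding tree edges forces any $s$--$r$ route to stay within a single $\cP^i$ is a welcome clarification the paper leaves implicit.
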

\begin{proof}
We use the same argument as in the proof of Lemma~\ref{lem:spanning_connected} except that we construct the subgraph $H$ as follows.

First, all path edges are marked as participating in subgraph $H$. All tree edges are marked as not participating in $H$. All spoke edges, except those incident to $s$ and $r$, are also marked as not participating. %
For each bit $x_i$, $1 \le i \le \Gamma$, vertex $s$ indicates that the spoke edge $(s, v_0^i)$ participates in $H$ if and only if $x_i = 1$. Similarly, for each bit $y_i$, $1 \le i \le \Gamma$, vertex $r$ indicates that the spoke edge $(r, v_{d^p-1}^i)$ participates in $H$ if and only if $y_i = 1$. (See Figure~\ref{fig:stcon}.)


\begin{figure}
\centering
   {
   \footnotesize
    \psfrag{A}[c]{$\cP^1$}
    \psfrag{B}[c]{$\cP^{\Gamma-1}$}
    \psfrag{C}[c]{$\cP^{\Gamma}$}
    \psfrag{a}[c]{$v_0^1$}
    \psfrag{b}[c]{$v_1^1$}
    \psfrag{c}[c]{$v_2^1$}
    \psfrag{d}{$v_{d^p-1}^1$}
    \psfrag{e}[c]{$v_0^\ell$}
    \psfrag{f}[c]{$v_1^\ell$}
    \psfrag{g}[c]{$v_2^\ell$}
    \psfrag{h}{$v_{d^p-1}^\ell$}
    \psfrag{i}[c]{$v_0^{\Gamma}$}
    \psfrag{j}[c]{$v_1^{\Gamma}$}
    \psfrag{k}[c]{$v_2^{\Gamma}$}
    \psfrag{l}{$v_{d^p-1}^{\Gamma}$}
    \psfrag{s}[c]{$s=u^p_0$}
    \psfrag{r}[c]{$u^p_{d^p-1}=r$}
    \psfrag{m}[c]{$u^p_{1}$}
    \psfrag{n}[c]{$u^p_{2}$}
    \psfrag{o}{$u^0_0$}
    \psfrag{p}{$u^{p-1}_0$}
    \psfrag{q}{$u^{p-1}_1$}
    \psfrag{D}{$R_2$}
    \psfrag{E}{$R_1$}
    \psfrag{F}{$R_0$}
    \includegraphics[height=0.45\linewidth]{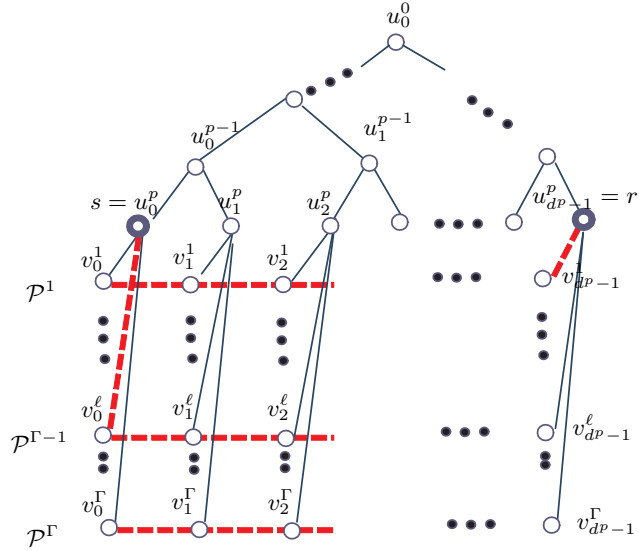}
   }
  \caption{\footnotesize Example of $H$ for $s$-$t$ connectivity problem (marked with dashed edges (red edges)) when $x=0...10 $ and $y=1...00$.}\label{fig:stcon} 
\end{figure}


Observe that $s$ and $r$ are connected in $H$ if and only if there exists $1\le i\le \Gamma$ such that both edges $(v_0^i,s), (v_{d^p-1}^i,r)$ are in $H$; thus, by the construction of $H$, $H$ is $s$-$r$ connected if and only if the input strings $x$ and $y$ are {\em not} disjoint.
\end{proof}

\subsection{Lower Bounds of Cycle Containment, $e$-Cycle Containment, and Bipartiteness Verification Problems}\label{sec:reduction from disj}\label{subsec:cycle_veri}
We modify the proof of Lemma~\ref{lem:stcon} to prove the following lemma which says that an algorithm for solving problems in this section can be used to compute $\DISJfunc$.

\begin{lemma}\label{lem:cycle ecycle bipartiteness}
For any $\Gamma$, $d\geq 2$, $p$ and $\epsilon\geq 0$ if there exists an $\epsilon$-error distributed algorithm for solving either the cycle containment, $e$-cycle containment or bipartiteness verification problem on graph $\graph$ then there exists an $\epsilon$-error algorithm for verifying $\DISJfunc$ (on $\Gamma$-bit inputs) on $\graph$ that uses the same time complexity.
\end{lemma}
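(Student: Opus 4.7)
The plan is to extend the reduction template of Lemmas~\ref{lem:spanning_connected} and \ref{lem:stcon}: given an $\epsilon$-error distributed algorithm $\cA$ for any of the three verification problems on $\graph$, and an instance $(x,y)$ of $\DISJfunc$ with $\Gamma$-bit strings held at $s$ and $r$, I would build a subgraph $H$ of $\graph$ in $O(1)$ rounds using only information that $s$ and $r$ can locally broadcast to their spoke-neighbors, run $\cA$ on $H$, and then translate $\cA$'s Boolean answer into an answer for $\DISJfunc(x,y)$. Because this reduction adds only $O(1)$ rounds and does not amplify the error, the resulting protocol is $\epsilon$-error for $\DISJfunc$ and runs in the same asymptotic time as $\cA$.

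The construction I propose is essentially the same for all three problems: mark every tree edge of $\cT$ and every path edge of every $\cP^\ell$ as belonging to $H$; mark the spoke $(s, v_0^i)$ iff $x_i=1$; mark the spoke $(r, v_{d^p-1}^j)$ iff $y_j=1$; and mark no other spokes. All marks other than those on the spokes at $s$ and $r$ are input-independent and decided locally. Writing $S_x=\{i:x_i=1\}$ and $S_y=\{j:y_j=1\}$, the key structural observation is that, without any spokes, the marked edges form a forest consisting of $\cT$ together with the $\Gamma$ disjoint paths $\cP^\ell$; each spoke then merges the corresponding $\cP^\ell$ into the tree component, and a cycle is created precisely when the \emph{same} $\cP^i$ receives spokes from both $s$ and $r$, i.e., iff $i\in S_x\cap S_y$.

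This immediately settles cycle containment: $H$ contains a cycle iff $S_x\cap S_y\neq\emptyset$ iff $\DISJfunc(x,y)=0$. For $e$-cycle containment I would choose $e=(s,u_0^{p-1})$, the unique tree edge incident to the leaf $s$ (which lies in $H$ by construction); every cycle in $H$ must traverse some $\cP^i$ with $i\in S_x\cap S_y$, both attaching spokes, and the unique $s$--$r$ tree path in $\cT$, which necessarily uses $e$ because $s$ is a leaf of $\cT$. Hence $e$ lies on a cycle iff $H$ has any cycle iff $\DISJfunc(x,y)=0$. For bipartiteness, the single-path cycle has length $1+(d^p-1)+1+2p=d^p+2p+1$ and the (possible) two-path cycle using $\cP^i$ and $\cP^j$ has length $2d^p+1$; choosing $d$ to be even (in particular $d=2$, which is the value plugged into Theorem~\ref{thm:all_verification}) makes $d^p+2p+1$ odd, so $H$ is non-bipartite iff $S_x\cap S_y\neq\emptyset$ iff $\DISJfunc(x,y)=0$.

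The main obstacle I anticipate is handling the bipartiteness case \emph{uniformly} over all $d\geq 2$: for odd $d$ the length $d^p+2p+1$ becomes even, so a single common bit produces only an even cycle and $H$ could still be bipartite when $\DISJfunc(x,y)=0$ (the two-path cycle, whose length $2d^p+1$ is always odd, requires $|S_x\cap S_y|\geq 2$, which is not equivalent to $\DISJfunc(x,y)=0$). Closing this parity gap requires a small, input-independent perturbation of the tree/path structure that shifts the effective $s$--$r$ distance by one edge without creating any cycle that does not already require $S_x\cap S_y\neq\emptyset$; carrying this out and checking that no spurious cycles are introduced is the one technical step that does not fall out of the template, whereas the cycle and $e$-cycle cases are immediate.
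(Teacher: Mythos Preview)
Your construction and the cycle/$e$-cycle arguments are exactly the paper's. One small imprecision: it is not true that \emph{every} cycle in $H$ uses the tree edge $e$ at $s$ (if $|S_x\cap S_y|\ge 2$ there is a cycle through two paths and four spokes that avoids $\cT$ entirely); what is true, and what both you and the paper actually need, is that \emph{some} cycle uses $e$ whenever $H$ has any cycle, since any $i\in S_x\cap S_y$ gives the single-path cycle through $e$.

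For bipartiteness you have correctly isolated the only nontrivial step and the right idea for it. The paper's concrete realization of your ``one-edge perturbation'' is to subdivide $e=(s,u_0^{p-1})$ by a virtual vertex $v'$, simulated locally by $s$, yielding a companion subgraph $H'$; the single-path cycle then has length $2p+d^p+1$ in $H$ and $2p+d^p+2$ in $H'$, so exactly one of them is odd regardless of the parity of $d^p$, while in the disjoint case both $H$ and $H'$ are forests and hence bipartite. Thus one declares ``not disjoint'' iff at least one of $H,H'$ fails bipartiteness (equivalently, since $d,p$ are fixed, just pick the correct one of $H,H'$ according to the parity of $d^p$). This is precisely your anticipated shift, and no spurious cycles arise because subdividing an edge of a forest keeps it a forest.
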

\begin{proof} We prove this lemma by modifying the proof of Lemma~\ref{lem:stcon}. We only note the key difference here.

%
\emph{Cycle containment verification problem:}
We construct $H$ in the same way as in the proof of Lemma~\ref{lem:stcon} {\em except} that the tree edges are participating in $H$ (see Figure~\ref{fig:cycle}).


\begin{figure}
\centering
   {
   \footnotesize
    \psfrag{A}[c]{$\cP^1$}
    \psfrag{B}[c]{$\cP^{\Gamma-1}$}
    \psfrag{C}[c]{$\cP^{\Gamma}$}
    \psfrag{a}[c]{$v_0^1$}
    \psfrag{b}[c]{$v_1^1$}
    \psfrag{c}[c]{$v_2^1$}
    \psfrag{d}{$v_{d^p-1}^1$}
    \psfrag{e}[c]{$v_0^\ell$}
    \psfrag{f}[c]{$v_1^\ell$}
    \psfrag{g}[c]{$v_2^\ell$}
    \psfrag{h}{$v_{d^p-1}^\ell$}
    \psfrag{i}[c]{$v_0^{\Gamma}$}
    \psfrag{j}[c]{$v_1^{\Gamma}$}
    \psfrag{k}[c]{$v_2^{\Gamma}$}
    \psfrag{l}{$v_{d^p-1}^{\Gamma}$}
    \psfrag{s}[c]{$s=u^p_0$}
    \psfrag{r}[c]{$u^p_{d^p-1}=r$}
    \psfrag{m}[c]{$u^p_{1}$}
    \psfrag{n}[c]{$u^p_{2}$}
    \psfrag{o}{$u^0_0$}
    \psfrag{p}{$u^{p-1}_0$}
    \psfrag{q}{$u^{p-1}_1$}
    \psfrag{D}{$R_2$}
    \psfrag{E}{$R_1$}
    \psfrag{F}{$R_0$}
    \includegraphics[height=0.45\linewidth]{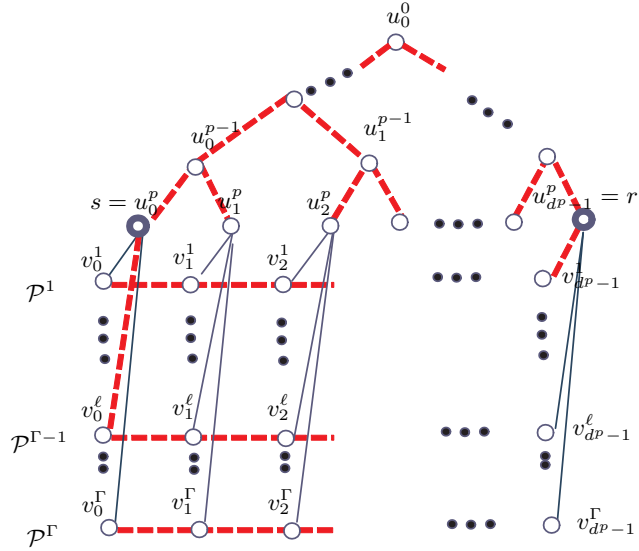}
   }
  \caption{\footnotesize Example of $H$ for the cycle and $e$-cycle containment and bipartiteness verification problem when $x=0...10 $ and $y=1...00$.}\label{fig:cycle} 
\end{figure}
%
%
%
%
%
%
%
%

In the case that the input strings are disjoint, $H$ will consist of the tree connecting $s$ and $r$ as well as 1) paths connected to $s$ but not to $r$, 2) paths connected to $r$ but not to $s$ and 3) paths connected neither to $r$ nor $s$. Thus there is no cycle in $H$.
In the case that the input strings are not disjoint, we let $i$ be an index that makes them not disjoint, that is $x_i=y_i=1$. This causes a cycle in $H$ consisting of some tree edges and path $\cP^i$ that are connected by edges $(s,v_0^i)$ and $(v_{d^p-1}^i,r)$ at their endpoints. Thus we have the following claim.
\begin{claim}
$H$ contains a cycle if and only if the input strings are not disjoint.
\end{claim}

\emph{$e$-cycle containment verification problem:} We use the previous construction for $H$ and let $e$ be the tree edge adjacent to $s$ (i.e., $e$ connects $s$ to its parent). Observe that, in this construction, $H$ contains a cycle if and only if $H$ contains a cycle containing $e$. Therefore, we have the following claim.
\begin{claim}\label{claim:cycle}
$e$ is contained in a cycle in $H$ if and only if the input strings are not disjoint.
\end{claim}

\emph{Bipartiteness verification problem:}
Finally, we can verify if such an edge $e$ is contained in a cycle by verifying the bipartiteness. First, we replace $e=(s,u^{p-1}_0)$ by a path $(s,v',u^{p-1}_0)$, where $v'$ is an additional/virtual vertex. This can be done without changing the input graph $G$ by having vertex $s$ simulated algorithms on both $s$ and $v'$. The communication between $s$ and $v'$ can be done internally. The communication between $v'$ and $u^{p-1}_0$ can be done by $s$. We construct $H'$ the same way as $H$ with both $(s,v')$ and $(v',u^{p-1}_0)$ marked as participating.

We observe that if the input strings are not disjoint, then either $H$ or $H'$ are not bipartite. To see this, consider two cases: when $d^p$ is even and odd. When $d^p$ is even and the input strings are not disjoint, there exists $i$ such that there is a cycle in $H$ consisting of some tree edges (including $e$) and path $P_i$ that are connected by edges $(s,v_0^i)$ and $(v_{d^p-1}^i,r)$ at their endpoints. This cycle is of length $2p+(d^p-1)+2$ -- an odd number causing $H$ to be not bipartite. If $d^p$ is odd, then by the same argument there is an odd cycle of length $(2p+1)+(d^p-1)+2$ in $H'$ (this cycle includes the edges $(s,v')$ and $(v',u^{p-1}_0)$ that replaces $e$); thus $H'$ is not bipartite.

Now we consider the converse: If the input strings are disjoint, then $H$ does not contain a cycle by the argument of the proof of the cycle containment problem (which uses the same graph). It follows that $H'$ does not contain a cycle as well. Therefore, we have the following claim.


%

\begin{claim}\label{claim:bipartite}
$H$ and $H'$ are both bipartite if and only if the input strings are disjoint.
\end{claim}
\end{proof}

We note that the above reduction for the bipartiteness verification problem might seem to suggest that one can also prove the lower bound of this problem by reducing from the $e$-cycle verification problem. However, this is not the case. The reason is that the above proof relies on the fact that $H$ and $H'$ each contains at most one cycle and such cycle must contain $e$. In general, this might not be the case.

\subsection{Lower Bounds of Connectivity, Cut, Edges on All Paths, $s$-$t$ Cut and Least-element List Verification Problems}\label{sec:rand reduction from other problems}
Lower bounds of verification problems in this section are proven using the lower bounds of problems in Section~\ref{subsec:spanconn_veri}, \ref{subsec:stconn_veri} and \ref{sec:reduction from disj}.

\paragraph{Connectivity verification problem}
We reduce from the spanning connected subgraph verification problem. Let $\mathcal A(G, H)$ be an algorithm that verifies if $H$ is connected in $O(\tau(n))$ time on any $n$-vertex graph $G$ and subgraph $H$. Now we will use this algorithm to verify whether a subgraph $H'$ of $G$ is connected or not.

Recall that, by definition,
$H'$ is a spanning connected subgraph if and only if every node is incident to at least one edge in $H'$ and $H'$ is connected.
Verifying that every node is incident to at least one edge in $H'$ can be done locally and all nodes can be notified if this is not the case in $O(D)$ rounds (by broadcasting). Checking if $H'$ is connected can be done in $O(\tau(n))$ rounds by running $\mathcal A(G, H')$. The total running time for checking if $H'$ is a spanning connected subgraph is thus $O(\tau(n)+D)$.
The lower bound of the spanning connected subgraph problem thus applies to the connectivity verification problem as well.

\paragraph{Cut verification problem} We again reduce from the spanning connected subgraph problem. Given a subgraph $H$, we verify if $H$ is a spanning connected subgraph as follows. Let $\bar{H}$ be the graph obtained by removing edges $E(H)$ of $H$ from $G$. Recall that $H$ is a spanning connected subgraph if and only if $\bar{H}$ is not a cut (see definition of a cut in Section~\ref{subsec:verification_def}). Thus, we verify if $\bar{H}$ is a cut and announce that $H$ is a spanning connected subgraph if and only if $\bar{H}$ is not a cut.

%

\paragraph{$s$-$t$ cut verification problem}
We reduce from $s$-$t$ connectivity. Similar to above, we use the fact that $H$ is $s$-$t$ connected if and only if $\bar{H}$ is {\em not} an $s$-$t$ cut. 

\paragraph{Least-element list verification problem} We reduce from $s$-$t$ connectivity. We set the rank of $s$ to $0$ and the rank of other nodes to any distinct positive integers. We assign weight $0$ to all edges in $H$ and $1$ to other edges. Give a set $S=\{<s, 0>\}$ to vertex $t$. Then we verify if $S$ is the least-element list of $t$. Observe that if $s$ and $t$ are connected by $H$ then the distance between them must be $0$ and thus $S$ is the least-element list of $t$. Conversely, if $s$ and $t$ are not connected then the distance between them will be at least one and $S$ will not be the least-element list of $t$.

\paragraph{Edge on all paths verification problem}
We reduce from the $e$-cycle containment problem using the following observation: $H$ does not contain a cycle containing $e$ if and only if $e$ lies on all paths between $u$ and $v$ in $H$ where $u$ and $v$ are two nodes incident to $e$.

\section{Deterministic Lower Bounds of Distributed Verification}\label{sec:deterministic_SHORT}\label{sec:deterministic}

In this section, we present deterministic lower bounds for Hamiltonian cycle, spanning tree and simple path verification. These problems are defined in Section~\ref{subsec:verification_def}. These lower bounds are proved in almost the same way as in Section~\ref{sec:randomized_lb}. The only difference is that we reduce from the deterministic lower bound of the {\em Equality} problem (cf. Lemma~\ref{lem:eq}).

\begin{theorem}\label{thm:all_deterministic_lb}
For any $p$, $B\geq 1$, and $n$ $\in$ $\{2^{2p+1}pB$, $3^{2p+1}pB$, $\ldots\}$, any deterministic distributed algorithm for any of the following verification problems requires $\Omega((\frac{n}{pB})^{\frac{1}{2}-\frac{1}{2(2p+1)}})$ time on some $\Theta(n)$-vertex graph of diameter $2p+2$ in the $B$ model: Hamiltonian cycle, spanning tree, and simple path verification.
\end{theorem}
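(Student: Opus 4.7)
The plan is to follow the same template as Section~\ref{sec:randomized_lb}, but replace the appeal to Lemma~\ref{lem:disj} with the deterministic Equality lower bound of Lemma~\ref{lem:eq}. As a first step, I would show that each of the three problems (Hamiltonian cycle, spanning tree, simple path) is at least as hard as computing $\EQfunc$ on $\Gamma$-bit inputs on $\ElkinGraph$. Once this is done, Lemma~\ref{lem:eq} gives an $\Omega(\min(d^p, \Gamma/(dpB)))$ deterministic bound, and the parameter choice $\Gamma = d^{p+1}pB$ (so that $n = \Theta(d^{2p+1}pB)$ by Lemma~\ref{lem:size} and the minimum is realized by $d^p$) converts this into $\Omega((n/(pB))^{\frac{1}{2}-\frac{1}{2(2p+1)}})$, exactly as in the passage from Lemma~\ref{lem:spanning_connected} to Theorem~\ref{thm:all_verification}.

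The cleanest of the three reductions is \textbf{spanning tree}. Given inputs $x,y$, mark all edges of $\cT$ and all path edges of every $\cP^\ell$ as being in $H$; at $s$, include the spoke $(s,v_0^i)$ iff $x_i=1$; at $r$, include the spoke $(r,v^i_{d^p-1})$ iff $y_i=0$ (that is, using $\bar y$). The resulting $H$ contains the entire tree $\cT$, which connects $s,r$, so each path $\cP^i$ is attached to $\cT$ through $0$, $1$, or $2$ spokes. A quick case analysis shows: (i) if both spokes on $\cP^i$ are present, $\cP^i$ together with a tree path from $s$ to $r$ forms a cycle; (ii) if neither spoke on $\cP^i$ is present, $\cP^i$ is a connected component disjoint from $\cT$, so $H$ is disconnected; (iii) exactly one spoke per $\cP^i$ is the unique way to get a spanning tree. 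Case (iii) holds for every $i$ iff $x_i \ne \bar y_i$ for all $i$, i.e.\ $x=y$. Construction and broadcasting the answer take $O(D)=O(p)$ rounds, so an $\epsilon$-error ST verification algorithm yields an $\epsilon$-error algorithm for $\EQfunc$ on $\ElkinGraph$ within the same asymptotic time.

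For \textbf{Hamiltonian cycle}, I would follow the Figure~\ref{fig:all_reductions} reduction chain $\EQfunc \to \mathrm{HC}$ directly, engineering a gadget so that the only way to form a simple cycle visiting every node is to pick, on each path $\cP^i$, the unique configuration of spokes (and accompanying tree edges) forced by $x_i=y_i$. Concretely: include all path edges (so each $\cP^i$ is a long chain that any Hamiltonian cycle must traverse in one shot); let a small fixed ``backbone'' subset of tree edges (together with the two endpoints of each $\cP^i$) be present unconditionally so that $\cT$ contributes a Hamiltonian path through its internal nodes; and let the spokes at $s$ and $r$ be chosen from $x$ and $\bar y$ respectively as in the ST case, so that $H$ closes into a single Hamiltonian cycle exactly when every $\cP^i$ is ``threaded'' with one spoke on each side, i.e.\ when $x=y$. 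A simple path reduction follows from HC by removing a canonical edge $e$ (e.g.\ the edge of $H$ at $s$ leading to the smaller-ID neighbor): each node locally checks $\deg_H(v)\in\{0,2\}$ (or $\{1,2\}$ after removing $e$), then runs the SP verifier on $H\setminus\{e\}$; the verifier accepts iff $H$ was a Hamiltonian cycle. The same idea turns a spanning tree verifier into an HC verifier (degree-$2$ local check plus ST verification on $H\setminus\{e\}$), which is the reduction the figure actually uses; since we have already established the ST lower bound directly from $\EQfunc$, this is only needed if one wants to avoid designing the HC gadget from scratch, in which case the analogous construction for simple paths (remove one endpoint's spoke to break the cycle at a canonical place) gives the SP bound too.

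\emph{Main obstacle.} The delicate step is the Equality-to-Hamiltonian-cycle gadget: unlike Set Disjointness, where a single matching coordinate suffices to flip the property, Equality requires agreement at \emph{every} coordinate, so the gadget must be arranged so that any single mismatched index either leaves a node unvisited or creates a degree anomaly that destroys the cycle. Threading the internal tree vertices (which have degree $d+1$ in $G$) into a Hamiltonian cycle while keeping the construction locally computable in $O(D)$ preprocessing is what needs care; the ST version avoids this difficulty because a tree can accommodate arbitrary branching, which is why the ST reduction above is straightforward and why it is prudent to prove ST first and derive HC and SP via the (more robust) reductions from HC sketched above, or alternatively to establish ST first and cite ST as the fundamental reduction, with HC and SP following by adding/removing a canonical edge and a single virtual node, respectively.
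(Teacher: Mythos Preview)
Your spanning tree reduction is correct and is in fact cleaner than the paper's route. In $\ElkinGraph$ the subgraph $H$ you describe (all of $\cT$, all path edges, spoke $(s,v_0^i)$ iff $x_i=1$, spoke $(r,v_{d^p-1}^i)$ iff $y_i=0$) is a spanning tree precisely when each $\cP^i$ is attached by exactly one spoke, which happens iff $x_i=y_i$ for all $i$; a mismatch $x_i=1,y_i=0$ creates a cycle and $x_i=0,y_i=1$ disconnects $\cP^i$. So Equality $\to$ spanning tree verification on the unmodified $\ElkinGraph$ works, and the parameter choice $\Gamma=d^{p+1}pB$ finishes that case. The paper instead proves Hamiltonian cycle first (via an elaborate five-stage construction on a \emph{modified} graph $G(\Gamma,2,p)'$ with many added edges) and only then deduces spanning tree and simple path from Hamiltonian cycle. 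Your direct ST argument bypasses all of that for the spanning tree case.

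The gap in your proposal is the Hamiltonian cycle and simple path cases. Your fallback plan has the reduction direction reversed. Showing that ``a spanning tree verifier yields a Hamiltonian cycle verifier'' (degree-$2$ check plus ST on $H\setminus\{e\}$) establishes that HC verification is \emph{no harder than} ST verification; it lets you transfer \emph{upper} bounds from ST to HC, not lower bounds. To get a lower bound for HC from your ST lower bound you would need the opposite---an ST instance solvable via an HC oracle---and there is no evident way to do that locally. The same issue applies to your suggested derivation of the simple path bound from ST. Your direct HC gadget sketch also cannot work on the unmodified $\ElkinGraph$: that graph need not be Hamiltonian at all (internal tree vertices have degree $d+1$ and the leaf/path structure does not admit a Hamiltonian cycle), which is exactly why the paper augments the graph with extra edges (cliques at the two ends, leaf-to-leaf edges, auxiliary paths hanging off internal tree vertices, and three special long edges) before building $H$. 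So for HC and SP you either need to carry out a genuine Equality $\to$ HC construction on a suitably modified graph (as the paper does), after which SP follows from HC in the correct direction, or find an independent Equality $\to$ SP gadget; your ST result, nice as it is, does not by itself yield the other two.
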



We first prove the lower bound of the Hamiltonian cycle problem and later extend to other problems.

\subsection{Lower Bound of Hamiltonian Cycle Verification Problem}\label{sec:hamiltonian}
%
We construct $G(\Gamma,2,p)'$ from $G(\Gamma,2,p)$ by adding edges and vertices to $G(\Gamma,2,p)$. We argue that the Simulation Theorem (cf. Theorem~\ref{thm:cc_to_distributed}) holds on $G(\Gamma,2,p)'$ as well.
Note that since $d=2$, $\cT$ is a binary tree. Let $m$ be $d^p-1$.

First we add edges in such a way that the subgraph induced by the vertices $\{v_0^1,\dots,v_0^{\Gamma}\}$ is a clique and the subgraph induced by the vertices $\{v_{m}^1,\dots,v_{m}^{\Gamma}\}$ is a clique as well. Observe that the Simulation Theorem  holds on $G(\Gamma,2,p)$ with this change since it only relies on Lemma~\ref{thm:computing_subconfig} which is true on the modified graph as well.

Now we add edges $(u_i^p,u_{i+1}^p)$ for all $0\leq i \leq m-1$. Thus we shorten the distance between each pair of nodes $u_i^p$ and $u_{i+1}^p$ from at most three (using two spoke edges and one path-edge from the according ${\mathcal P}^l$) to one (red dashed edges in Figure \ref{fig:ham4}). This will affect the lower bound by at most a constant factor. Now for each node $u_i^l$ we add a path of length $p-l+1$ containing $p-l$ new nodes connecting $u_i^l$ to $u_{i\cdot d^{p-l}+1}^p$ (green dotted paths/nodes in Figure \ref{fig:ham4}). This will increase the number of messages needed in Lemma~\ref{thm:computing_subconfig} by a factor of two. Thus, the Simulation Theorem still holds.


Further, we add the edges $(v_{m-1}^{\Gamma-2},u_0^p)$, $(v_{m}^{\Gamma-3},u_0^p)$ and $(v_m^\Gamma,u_0^0)$. This will again increase the number of messages needed in Lemma~\ref{thm:computing_subconfig} by a constant factor of two and thus the Simulation Theorem still holds.

Finally, we add the following three edges $(u_0^0,v_{m}^{\Gamma})$, $(s, v_{m-1}^{\Gamma-2})$, and $(s, v_{m-1}^{\Gamma-3})$. Adding these three edges will increase the number of messages needed in Lemma~\ref{thm:computing_subconfig} by at most three and thus the Simulation Theorem still holds. This completes the description of $G(\Gamma,2,p)'$.

\begin{figure}[htbp]
\begin{center}
\includegraphics[width=0.8\textwidth]{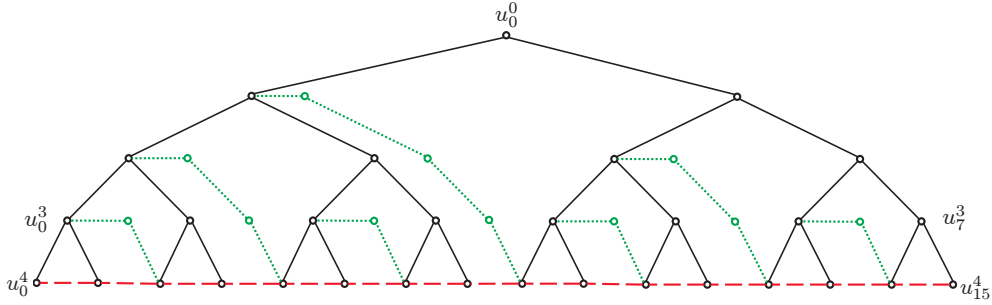}
\end{center}
\caption{Example of the modification of the tree-part of $G(\Gamma, 2, p)$ in the case $p=4$. The red dashed edges are new edges $(u_i^p,u_{i+1}^p)$ and the green dotted edges form new paths between nodes connecting $u_i^l$ to $u_{i\cdot d^{p-l}+1}^p$.}
\label{fig:ham4}
\end{figure}

To simplify and shorten the proof, we do some preparation. First, we consider strings $x$ and $y$ of length $b$ and define $\Gamma$ to be $ 2+12b$ -- this changes the bound only by a constant factor. Now, from $x$ and $y$, we construct strings of length $m$ (we assume $m$ to be even)
\begin{align*}
{x}{'}&:=1 x_1 01 x_1 01 x_2 01 x_2 01 \dots 01 x_{b} 01 x_{b} 01 \overline{x_1} 01 \overline{x_1} 01 \dots 01 \overline{x_{b}} 01 \overline{x_{b}} 010,\\
{y}{'}&:=1 y_1 01 y_1 01 y_2 01 y_2 01\dots 01 y_{b} 01 y_{b} 01 \overline{y_1} 01 \overline{y_1} 01 \dots 01 \overline{y_{b}} 01 \overline{y_{b}} 010
\end{align*}
where $\overline{x_i}$ and $\overline{y_i}$ denote negations of $x_i$ and $y_i$ respectively.

Now we construct $H$ in five stages. In the first stage we create some short paths that we call lines. In the next two stages we construct from these lines two paths $S_1$ and $S_2$ by connecting the lines in special ways with each other (the connections depend on the input strings). In the fourth stage we construct a path $S_3$ that will connect the leftover lines with each other.
These three paths, $S_1$, $S_2$, and $S_3$, will cover all nodes. The final stage is to connect all three paths together.

We will construct these paths in such a way that, the input strings are equal if and only if the resulting graph $H$ is a Hamiltonian cycle. Later we will observe that in the case the strings are equal all three paths will look like disjoint paths when using the graph layout of Figure \ref{fig:ElkinGraph}. The formal description of the five stages will be accompanied by a small example in Figures \ref{fig:ham1} and \ref{fig:ham2}. Here, $x=01=y$.

\paragraph{\bf Stage 1} We create the lines by marking most path edges (to be more precise, all edges $(v_j^i,v_{j+1}^i)$ for all $i\in [1,\Gamma]$ and $j\in\{2,\dots,m-2\}$ for $j\in\{1,\dots,m-1\}$) as participating in subgraph $H$. In addition we add the edges $(v_{m-1}^1,v_{m}^1)$ and $(v_{0}^{1},v_{1}^{1})$ to $H$. These basic elements are called lines now (see Figure \ref{fig:ham1}).

\paragraph{\bf Stage 2} Define path $S_1$ as follows. All spoke edges incident to $\cH^1$ are marked as {\em not}
participating in $H$, except those incident to $s$ and $r$. For each bit $x'_i$, $1 \le i \le \Gamma$, vertex $s$ indicates that the edge $(v_0^i,v_0^{i+1})$ participates in $H$ if and only if $x'_i = 1$. Similarly, for each bit $y'_i$, $1 \le i \le \Gamma$, vertex $r$ indicates that the spoke edge $(v_{m}^i,v_{m}^{i+1})$ participates in $H$ if and only if $y'_i = 0$. Furthermore for $2\leq i\leq \Gamma$ each edge $(v_0^i,v_1^i)$ participates in $H$ if and only if $x'_{i-1}\neq x'_{i}$. Similarly for $2\leq i\leq \Gamma$ each edge $(v_{m-1}^i,v_{m}^i)$ participates in $H$ if and only if $y'_{i-1}\neq y'_{i}$. In addition we let edges $(v_0^1,v_1^1)$ and $(v_{m-1}^1,v_m^1)$ participate in $H$. We denote the path that results from connecting the lines according to the rules above by $S_1$. An example is given in Figure \ref{fig:ham1}.
\begin{figure}[htbp]
\begin{center}
\includegraphics[width=\textwidth]{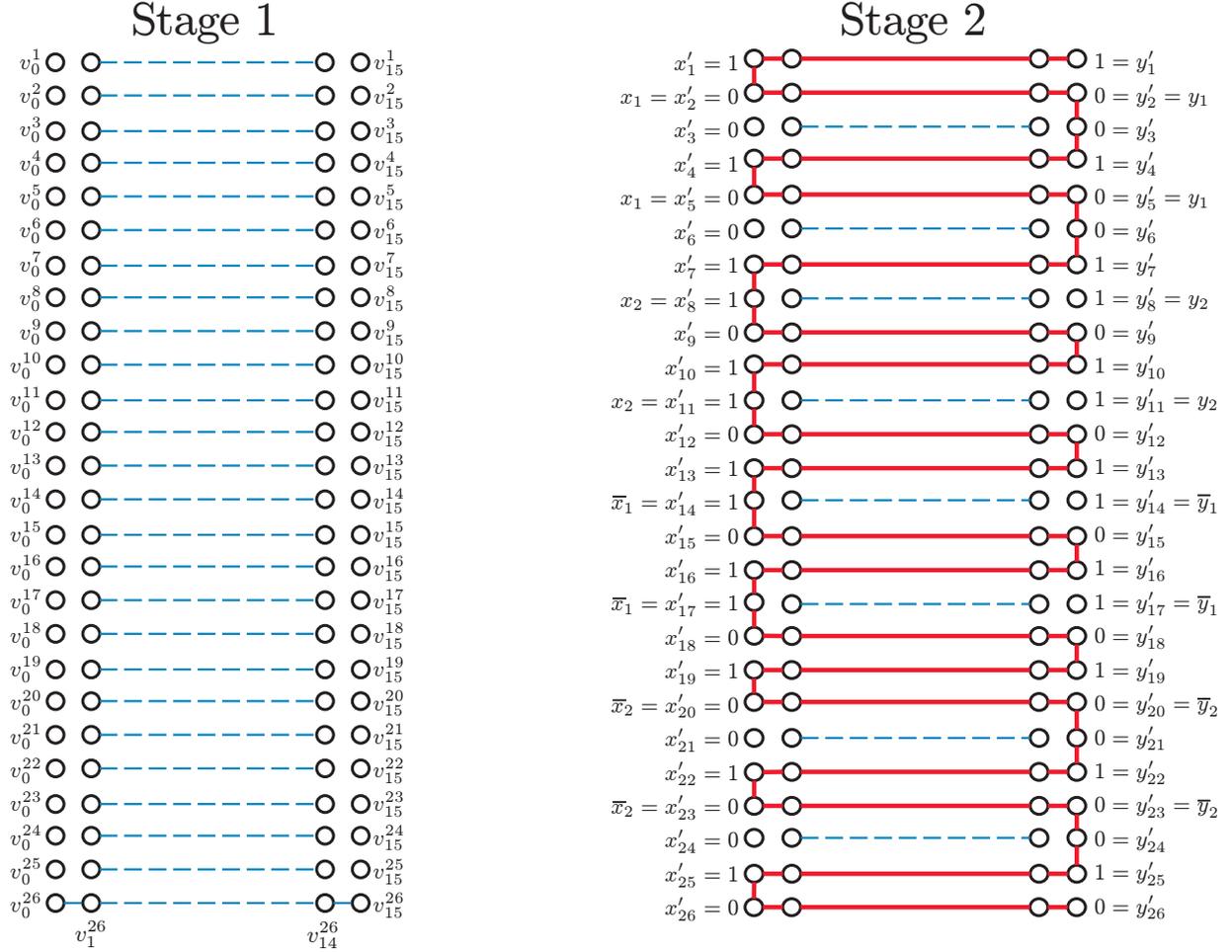}
\end{center}
\caption{Example of the reduction using input strings $x=01=y$. Note that $\Gamma$ is $12\cdot2+2=26$ and we use $d=2$ and $p=4$. In Stage 1, we add blue paths to $H$, that are displayed by blue dashed lines. In Stage 2, we create $S_1$, the red-colored path that looks like a snake.}\label{fig:ham1}
\end{figure}

\paragraph{\bf Stage 3} Define $S_2$ as follows. We connect the lines not covered in Stage 3 (except the tree $\cT$) and those nodes that are not covered by any path or line. In particular, on the left side of the graph, for $0\leq i\leq 2b$, we do the following.
\begin{itemize}
\item If ${x}'_{2+6i}=0$ (and thus ${x}'_{5+6i}=0$ due to the definition of $x'$) then edges $(v_1^{3+6i},v_0^{3+6i})$, $(v_0^{3+6i},v_0^{6+6i})$ and $(v_0^{6+6i},v_1^{6+6i})$ are indicated to participate in $H$.
\item If $x'_{2+6i}=1$ (and thus $x'_{5+6i}=1$ due to the definition of ${x}'$), edges $(v_{m}^{3+6i},v_{m-1}^{3+6i})$, $(v_1^{3+6i},v_1^{6+6i})$ and $(v_{m-1}^{6+6i},v_{m}^{6+6i})$ will participate in $H$.
\end{itemize}
On the right side of the graph, for $0\leq i\leq 2b$ we indicate the following edges to participate in $H$:
\begin{itemize}
\item $(v_{m}^{5+6i},v_{m}^{2+6(i+1)})$ if $y'_{5+6i}=0$ and $y'_{2+6(i+1)}=0$.
\item $(v_{m}^{5+6i},v_{m-1}^{3+6(i+1)})$ if $y'_{5+6i}=0$ and $y'_{2+6(i+1)}=1$.
\item $(v_{m-1}^{6+6i},v_{m}^{2+6(i+1)})$ if $y'_{5+6i}=1$ and $y'_{2+6(i+1)}=0$.
\item $(v_{m-1}^{6+6i},v_{m-1}^{3+6(i+1)})$ if $y'_{5+6i}=1$ and $y'_{2+6(i+1)}=1$.
\end{itemize}

We denote the path that results from connecting lines according to the rules above by $S_2$. An example is given in Figure \ref{fig:ham2}.

\begin{figure}[htbp]
\begin{center}
\includegraphics[width=0.8\textwidth]{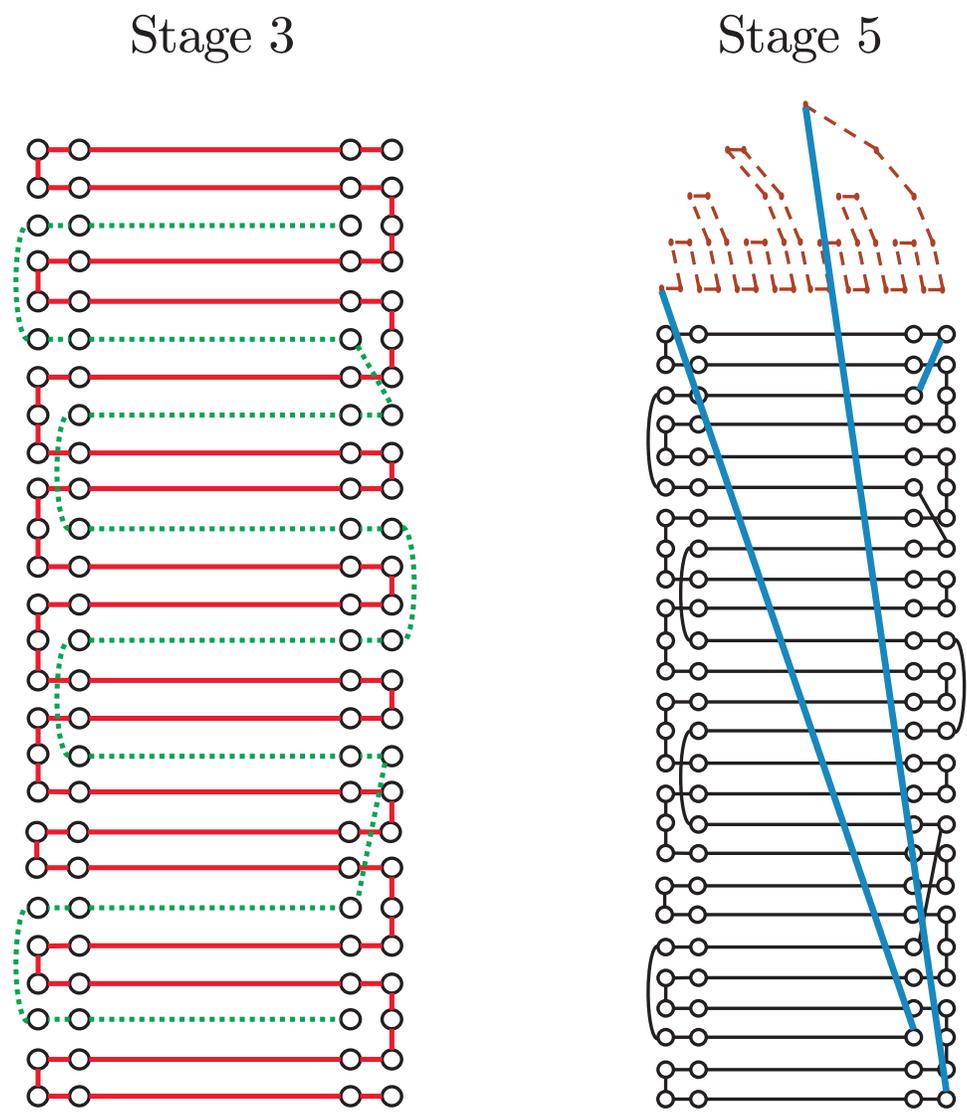}
\end{center}
\caption{Continuation of the example started in Figure \ref{fig:ham1}. In Stage 3, we add $S_2$ in dotted green (dotted lines). $S_3$ is displayed in dashed brown and added in stage four. Finally we connect $S_1$, $S_2$ and $S_3$ by bold blue edges to a Hamiltonian cycle in Stage 5.}\label{fig:ham2}
\end{figure}

\paragraph{\bf Stage 4} We include edges of the modified tree in a canonical way in $H$ to form a path $S_3$ that covers all nodes in $\cT$ and starts and ends at $u_0^p$ and $u_0^0$ respectively, as follows. For all odd $i$ such that $0\leq i\leq m-1$, we include the edge $(u_i^p,u_{i+1}^p)$ in $H$. For all $0\leq l \leq p-1$ and all $0\leq i\leq d^l$ we include the edges $(u_i^l,u_{i\cdot d+1}^{l+1})$ in $H$, and if $i$ is odd, we also include the path connecting $u_i^l$ to $u_{i\cdot d^{p-l}+1}$ in $H$. An example is given in Figure \ref{fig:ham5}.
\begin{figure}[htbp]
\begin{center}
\includegraphics[width=0.7\textwidth]{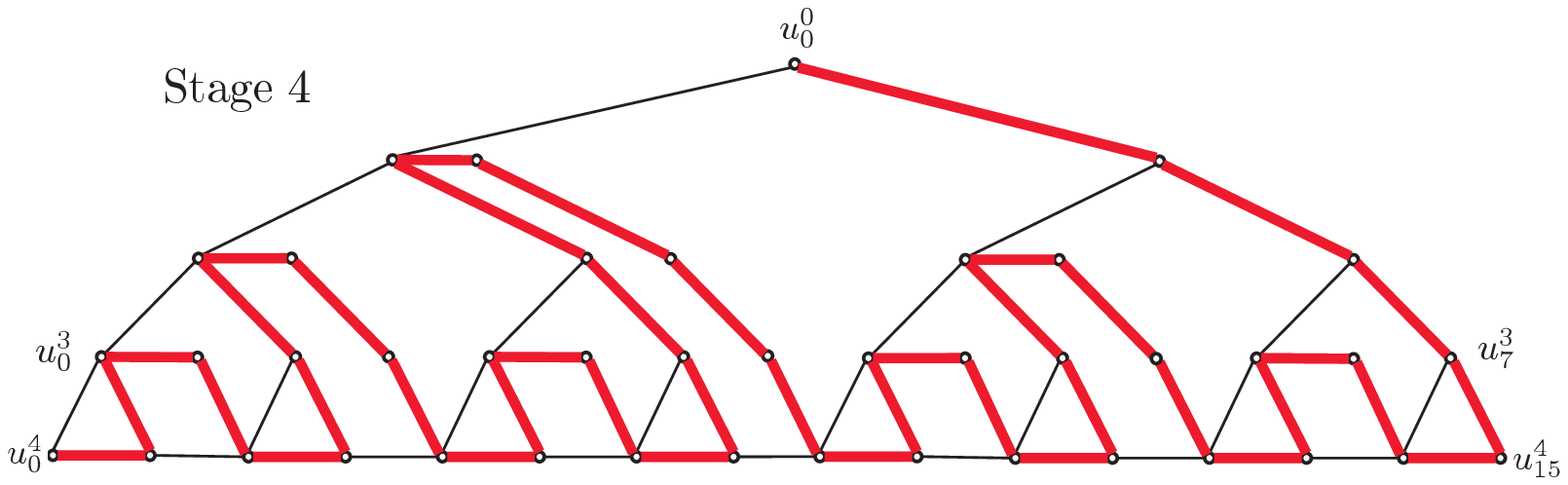}
\end{center}
\caption{The modified tree for $p=4$ and $d=2$. The red bold edges form path $S_3$.}\label{fig:ham5}
\end{figure}

\textbf{Stage 5} We now connect end points of the three paths.
Let us investigate the six endpoints of the three paths.
\begin{itemize}
\item End points of $S_3$ are $u_0^0$ and $u_0^p$ (see Figure~\ref{fig:ham5}).
\item Path $S_2$ has both endpoints on the $r$-side (see Figure~\ref{fig:ham2}). Let us denote these endpoints by $e_1$ and $e_2$. Depending on the input strings, endpoint $e_1$ is either $v_{m-1}^{3}$ or $v_{m}^{2}$ and the other endpoint $e_2$ is either $v_{m-1}^{\Gamma-2}$ or $v_{m-1}^{\Gamma-3}$.
\item The endpoints of $S_1$ are both on the $r$-side (see Figure~\ref{fig:ham1}): $v_{m}^{\Gamma}$ and $v_{m}^1$.
\end{itemize}
Now we connect those endpoints in the following way.
\begin{itemize}
\item Connect $S_1$ and $S_2$, each at one endpoint on the $r$-side, by letting edge $(e_1,v_{m}^1)$ participate in $H$.
\item We connect the endpoint $u_0^0$ of $S_3$  to the endpoint $v_{m}^{\Gamma}$ of $S_1$ by marking an edge $(u_0^0, v_{m}^{\Gamma})$ as participating.
\item Connect the endpoint $e_2$ to $v$ and $v$ to the endpoint $u_0^p$ of $S_3$ by including the corresponding edges of $G(\Gamma,2,p)'$ in $H$.
\end{itemize}
An example is given in Figure \ref{fig:ham2}.

If the strings are equal then the result is a Hamiltonian cycle since it contains the paths $S_1$, $S_2$, and $S_3$ that will be three disjoint paths (connected to be a cycle) that cover all nodes.
Now we prove that if the strings are not equal, $H$ will not be a Hamiltonian cycle. Let $i$ be a position in which $\indVarS$ and $\indVarR$ differ. Let us consider the case that $x_i=0$ and $y_i=1$. Then the sequence $x'_{1+6i},\dots,x'_{6+6i}$ will be $100100$ while the sequence $y'_{1+6i},\dots,y'_{6+6i}$ will be $110110$. When we look at the part of the graph $H$ corresponding to this sequence (see Figure \ref{fig:ham3}), we see that $H$ can not be a cycle and thus not a Hamiltonian cycle: due to $y'_{2+6i}=0$ and $x'_{2+6i}=1$ there are no edges on the $s$- nor $r$-side of level $2+6i$ connecting the part of $S_1$ below level $2+6i$ to the part of $S_1$ above level $2+6i$. There will also be no edges of $S_2$ that accidentally connect those two parts to each other.
The case that $x_i=1$ and $y_i=0$ is treated the same way: due to the construction of $x'$ and $y'$ the sequence $x'_{6b+1+6i},\dots,x'_{6b+6+6i}$ is $100100$ and $y'_{6b+1+6i},\dots,y'_{6b+6+6i}$ will be $110110$ and we can use exactly the same argument as before.
\begin{figure}
\begin{center}
\includegraphics[width=0.5\textwidth]{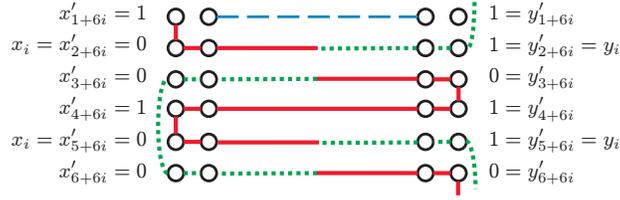}
\end{center}
\caption{Example of the case that $x_i=0$ and $y_i=1$.}\label{fig:ham3}
\end{figure}

Now consider an algorithm $\cA_{ham}$ for the Hamiltonian cycle verification problem. When $\cA_{ham}$ terminates, vertex $s$
determines its output for the equality problem by stating that both input strings are equal if and only if $\cA_{ham}$ verified that $H$ is a Hamiltonian cycle.

Hence a fast algorithm for the Hamiltonian cycle problem on $G(\Gamma,2,p)'$ can be used to correctly solve the given instance of the equality problem on $G(\Gamma,2,p)'$ faster. This contradicts the lower bound of the equality verification problem, which holds for all $d$ (here we used $d=2$).


%

\subsection{Lower Bound of Spanning Tree and Path Verification Problems}\label{sec:other_deterministic} The remaining two deterministic lower bounds follow from the lower bound of the Hamiltonian cycle verification problem, as follows.

\paragraph{Spanning Tree Verification Problem} We reduce Hamiltonian cycle verification to spanning tree verification using $O(D)$ rounds using the following observation: $H$ is a Hamiltonian cycle if and only if every vertex has degree exactly two and $H\setminus e$, for any edge $e$ in $H$, is a spanning tree.

Therefore, to verify that $H$ is a Hamiltonian cycle, we first check whether every vertex has degree exactly two in $H$. If this is not true then $H$ is not a Hamiltonian cycle. This part needs $O(D)$ rounds. Next, we check if $H\setminus \{e\}$, for any edge $e$ in $H$, is a spanning tree. We announce that $H$ is a Hamiltonian cycle if and only if $H\setminus \{e\}$ is a spanning tree.

\paragraph{Simple Path Verification Problem} Similar to the above proof, we reduce Hamiltonian cycle verification to path verification using $O(D)$ rounds using the following observation: $H$ is a Hamiltonian cycle if and only if every vertex has degree exactly two and $H\setminus e$ is a path (without cycles).

\section{Hardness of Distributed Approximation}\label{sec:approxalgo}
In this section we show a time lower bound of $\Omega(\sqrt{n/(B\log n)})$ for Monte Carlo randomized approximation algorithms of many problems (defined in Section~\ref{subsec:approx_def}), as in the following theorem.


%
%

\begin{theorem}\label{thm:randomized_approximation}
For any polynomial function $\alpha(n)$, numbers $p$, $B\geq 1$, and $n\in \{2^{2p+1}pB, 3^{2p+1}pB, \ldots\}$, there exists a constant $\epsilon>0$  such that any $\alpha(n)$-approximation $\epsilon$-error distributed algorithm for any of the following problems requires $\Omega((\frac{n}{pB})^{\frac{1}{2}-\frac{1}{2(2p+1)}})$ time on some $\Theta(n)$-vertex graph of diameter $2p+2$ in the $B$ model: minimum spanning tree~\cite{Elkin06,PelegR00}, shallow-light tree~\cite{peleg}, $s$-source distance~\cite{Elkin05}, shortest path tree~\cite{Elkin06},  minimum routing cost spanning tree~\cite{WuLBCRT99}, minimum cut~\cite{Elkin-sigact04}, minimum $s$-$t$ cut, shortest $s$-$t$ path and generalized Steiner forest~\cite{KhanKMPT08}.
\end{theorem}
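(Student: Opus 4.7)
The plan is to reduce each optimization problem to one of the verification problems for which Theorem~\ref{thm:all_verification} already gives a matching $\Omega\left((n/(pB))^{1/2 - 1/(2(2p+1))}\right)$ lower bound on the graph family $\graph$. The universal trick is the classical gap-creating weight assignment (analogous to the one used for hardness of approximating metric TSP): given an instance of a verification problem on a subgraph $H\subseteq G$, place weight $1$ on every edge $e\in E(H)$ and weight $W=n^{c}$ on every edge $e\in E(G)\setminus E(H)$, where $c$ is chosen so that $W$ exceeds $\alpha(n)\cdot n^{O(1)}$ (possible because $\alpha$ is polynomial). Since each weight fits in $O(\log n)$ bits, this modification changes the bandwidth model only by a constant factor in $B$, preserving the asymptotic lower bound. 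Every node learns the weights of its incident edges from its knowledge of $H$, so no extra communication is needed to set up the weighted instance; and the $O(D)$ rounds needed to broadcast the final verification answer from $s$ (or $r$) are absorbed by the $+D$ term already present in the bound.

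For each listed problem, I would argue that the optimum exhibits a $W$-vs-$\mathrm{poly}(n)$ gap controlled by a verification predicate. For \emph{MST}, \emph{shallow-light tree}, \emph{$s$-source distance}, \emph{shortest path tree}, and \emph{minimum routing cost spanning tree}, I reduce from spanning connected subgraph verification: if $H$ is spanning and connected, an optimal solution of weight $O(n^2)$ exists using only edges of $H$; if not, any feasible solution (tree, distance label, etc.) must traverse at least one non-$H$ edge, pushing the optimum to $\geq W$. For \emph{shortest $s$-$t$ path} and \emph{generalized Steiner forest} (the latter with the single demand pair $\{s,t\}$), I reduce from $s$-$t$ connectivity verification in the same way. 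For \emph{minimum cut} and \emph{minimum $s$-$t$ cut}, I reduce from cut verification and $s$-$t$ cut verification, respectively: if $H$ is the claimed (s-t) cut, then the total weight of $E(H)$ is $O(n^2)$ and serves as an upper bound on the minimum cut; if $H$ is not a cut, then every (s-t) cut of $G$ must contain at least one edge outside $H$, of weight $W$.

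In every case, an $\alpha(n)$-approximate output $\tilde w$ satisfies $\tilde w\leq \alpha(n)\cdot O(n^{O(1)})<W$ in the ``yes'' case and $\tilde w\geq W$ in the ``no'' case, so $s$ (or $r$) can decide the verification predicate by comparing $\tilde w$ to $W$ and then inform all other nodes in $O(D)$ additional rounds. A faster-than-claimed approximation algorithm would therefore yield a faster-than-claimed verification algorithm, contradicting Theorem~\ref{thm:all_verification}. Because the reductions are local and weight-only, they preserve the diameter $2p+2$ and the size $\Theta(n)$ of the underlying graph.

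The main obstacle is purely bookkeeping: one has to be careful that (i) the heavy weight $W$ still fits in $O(\log n)$ bits so the $B$-bounded bandwidth assumption is not violated, which is exactly where we use the hypothesis that $\alpha(n)$ is \emph{polynomial}; (ii) for problems whose output is distributed across vertices rather than a single global number (shortest path tree, generalized Steiner forest, shallow-light tree, $s$-source distance), the local information at $s$ already suffices to certify whether a heavy edge is used, or equivalently, whether the computed $s$-to-$v$ distance exceeds the threshold; and (iii) for shallow-light tree, the length function can be set equal to the weight function and the radius bound $\ell$ chosen as $O(n)$, so the feasibility constraint does not trivialize either side of the gap. All of these are straightforward to verify once the reduction template above is fixed.
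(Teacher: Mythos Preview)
Your proposal is correct and follows essentially the same gap-creating weight reduction as the paper (Section~\ref{sec:approxalgo}), reducing each optimization problem to the appropriate verification problem from Theorem~\ref{thm:all_verification}. The only detail worth adjusting is the shallow-light tree case: the paper sets every edge's \emph{length} to $1$ (with radius bound $n$) rather than equal to its weight, so that in the ``no'' case the instance remains feasible but heavy, avoiding the awkwardness of an infeasible optimization instance.
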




In particular, for graphs with diameter $D=4$, we get $\Omega((n/B)^{1/3})$ lower bound and for graphs with diameter $D=\log{n}$ we get $\Omega(\sqrt{n/(B\log n)})$. Similar analysis also leads to a $\Omega(\sqrt{n/B})$ lower bound for graphs of diameter $n^\delta$ for any $\delta>0$, and $\Omega((n/B)^{1/4})$ lower bound for graphs of diameter three using the same analysis as in \cite{Elkin06}.

The main tool used in this section is the randomized lower bound of network verification problems defined in Section~\ref{subsec:verification_def} and proved in Section~\ref{sec:randomized_lb}.

The main proof idea is similar to the proof that the Traveling Salesman Problem on general graphs cannot be approximated within $\alpha(n)$ for any polynomial computable function $\alpha(n)$ (see, e.g., \cite{Vazirani-book}): We will define a weighted graph $G'$ in such a way that if the subgraph $H$ satisfies the desired property then the approximation algorithm must return some value that is at most $f(n)$, for some function $f$. Conversely, if $H$ does not satisfy the property, the approximation algorithm will output some value that is strictly more than $f(n)$. Thus, we can distinguish between the two cases.

To highlight the main idea, we first prove the theorem for the minimum spanning tree problem in the next subsection. Proofs of other problems are in Subsection~\ref{subsec:approxalgo_FULL}.

\subsection{Lower Bound of Approximating the Minimum Spanning Tree}\label{subsec:approx_MST}
Recall that in the minimum spanning tree problem, we are given a connected graph $G$ and we want to compute a minimum spanning tree (i.e., a spanning tree of minimum weight). At the end of the process each vertex knows which edges incident to it are in the output tree.


Let $\mathcal{A}_\epsilon$ be an $\alpha(n)$-approximation $\epsilon$-error algorithm for the minimum spanning tree problem. We show that
$\mathcal{A}_\epsilon$ can be used to solve the spanning connected subgraph verification problem using the same running time (thus the lower bound proved in Theorem~\ref{thm:all_verification} applies).

To do this, construct a weight function on edges in $G$, denoted by $\omega$, by assigning weight $1$ to all edges in $H$ and weight $n\alpha(n)$ to all other edges. Note that constructing $\omega$ does not need any communication since each vertex knows which edges incident to it are members of $H$. Call this weighted graph $G'$. Now we find the weight $W$ of the $\alpha(n)$-approximated minimum spanning tree of $G'$ using $\mathcal{A}_\epsilon$ and announce that $H$ is a spanning connected subgraph if and only if $W$ is less than $n\alpha(n)$.

%
%
We will show that the weighted graph $G'$ has a spanning tree of weight less than $n\alpha(n)$ if and only if $H$ is a spanning connected subgraph of $G$ and thus the algorithm above is correct. Suppose that $H$ is a spanning connected subgraph. Then, there is a spanning tree that is a subgraph of $H$ and has weight $n-1<n\alpha(n)$ since $\alpha(n)\geq 1$. Thus the minimum spanning tree has weight less than $n\alpha(n)$.
Conversely, suppose that $H$ is not a spanning connected subgraph. Then, any spanning tree of $G'$ must contain an edge not in $H$. Therefore, any spanning tree has weight at least $n\alpha(n)$ as claimed.

%

\subsection{Lower Bounds of Other Problems}\label{subsec:approxalgo_FULL}
%
%
%
%
We now prove the remaining lower bounds.

\paragraph{Shallow-light tree problem} The lower bound for the shallow-light tree problem follows immediately from the lower bound of the MST problem when we set the length of every edge to be one and radius requirement to be $n$. In this case, the spanning tree satisfies the radius requirement and so the minimum-weight shallow-light tree becomes the minimum spanning tree.

\paragraph{$s$-source distance and shortest path tree problems} We construct the graph $G'$ as in Subsection~\ref{subsec:approx_MST} and the lower bounds follow in a similar way: $H$ is a spanning connected subgraph if and only if the distance from $s$ to every node is at most $n-1$ (i.e., $\mathcal{A}$ has approximated the distance to be at most $(n-1)\alpha(n)$), which is true if and only if the shortest path spanning tree contains only edges of weight one (i.e., the total weight of the shortest path spanning tree is at most $(n-1)\alpha(n)$).

\paragraph{Minimum routing cost spanning tree problem} We modify the construction of $G'$ in Subsection~\ref{subsec:approx_MST} as follows. We assign weight one to edges in $H$ and $n^3\alpha(n)$ to other edges. Observe that if $H$ is a spanning connected subgraph, the routing cost between any pair will be at most $n-1$ and thus the cost of the $\alpha(n)$-approximation minimum routing cost spanning tree will be at most $(n-1){n \choose 2}\alpha(n)<n^3\alpha(n)$. Conversely, if $H$ is not a spanning connected subgraph, some pair of nodes will have routing cost at least $n^3\alpha(n)$ and thus the minimum routing cost spanning tree will cost at least $n^3\alpha(n)$.

\paragraph{Minimum cut problem} We define $\bar{G'}$ by assigning weight one to all edges in $\bar{H}=(V, E(G)\setminus E(H))$ and $n\alpha(n)$ to all other edges and use the fact that $\bar{H}$ is a cut if and only if $\bar{G'}$ has a minimum cut of weight at most $n-1$, i.e., $\mathcal{A}$ outputs a value of at most $(n-1)\alpha(n)$.

\paragraph{Minimum $s$-$t$ cut problem} The reduction is the same as in the case of the minimum cut problem. Observe that $s$ and $t$ are {\em not} connected in $H$ if and only if $\bar{H}$ is a $s$-$t$ cut which in turn is the case if and only if $\bar{G'}$ has a minimum $s-t$ cut of weight $n-1$. Thus, the lower bound of $s$-$t$ cut verification problem implies the lower bound of this problem.

\paragraph{Shortest $s$-$t$ path problem} We again construct $G'$ as in Subsection~\ref{subsec:approx_MST}. Observe that $s$ and $t$ are in the same connected component in $H$ if and only if  the distance between $s$ to $t$ in $G'$ is at most $n-1$, i.e., $\mathcal{A}$ outputs a value of at most $(n-1)\alpha(n)$. The lower bound follows from the lower bound of $s$-$t$ connectivity verification problem.

\paragraph{Generalized Steiner forest problem} We will reduce from the lower bound of $s$-$t$ connectivity. We have only one set $V_1=\{s, r\}$. Construct $G'$  as in Subsection~\ref{subsec:approx_MST}. Observe that the minimum generalized Steiner forest will have weight at most $n-1$ if $H$ is $s$-$t$ connected and weight at least $n\alpha(n)$ otherwise. (Recall that $G$ is assumed to be connected in the problem definition.)

\section{Tightness of Lower Bounds of Verification Problems}\label{sec:tightness}
We note that almost all lower bounds of verification problems stated so far are almost tight. To show this we will present deterministic $O(\sqrt{n}\log^*n+D)$-time algorithms for all verification problems except the least-element list verification problem. This upper bound almost matches the $\tilde \Omega(\sqrt{n})$ lower bounds shown in previous sections.
%
%
Our main tool is the MST algorithm by Kutten and Peleg~\cite{KuttenP98} and the connected component algorithm by Thurimella~\cite[Algorithm~5]{Thurimella97}.

Recall that in the MST problem, we are given a weighted network $G$ (that is the weight of each edge is known to the nodes incident to it) and we want to find a minimum spanning tree (for each edge $e$, nodes incident to it know whether $e$ is in the MST or not.) Kutten and Peleg~\cite{KuttenP98} showed that this problem can be solved by a $O(\sqrt{n}\log^*n+D)$-time distributed deterministic algorithm.

We also note the following connected component algorithm by Thurimella \cite{Thurimella97}. Given a subgraph $H$ of $G$, the algorithm outputs a label $\ell(v)$ for each node $v$ such that for any two nodes $u$ and $v$, $\ell(u)=\ell(v)$ if and only if $u$ and $v$ are in the same connected component.  Theorem~6 in \cite{Thurimella97} states that the distributed time complexity of this algorithm is $O(D+f(n)+g(n)+\sqrt{n})$ where $f(n)$ and $g(n)$ are the distributed time complexities of finding a MST and a $\sqrt{n}$ -dominating set, respectively. Due to \cite{KuttenP98} we have that $f(n)=g(n)=O(D+\sqrt{n}\log^* n)$.

We are now ready to present algorithms for our verification problems.


\paragraph{Spanning connected subgraph, spanning tree, cycle containment and connectivity verification problems} We construct a weighted graph $G'$ by assigning weight zero to all edges in $H$ and weight one to other edges (for each edge $e$, nodes incident to it know its weight). Observe the followings.
\begin{itemize}
\item $H$ is a spanning tree if and only if the MST of $G'$ has cost zero.
\item $H$ is a spanning connected subgraph if and only if the MST of $G'$ has cost zero.
\item $H$ contains no cycle if and only if all edges in $H$ are in the MST of $G'$, i.e., the cost of the MST of $G'$ is $n-1-|E(H)|$ where $|E(H)|$ is the number of edges in $H$.
\item $H$ is connected if and only if there are $|V(H)|-1$ edges in the MST that have cost zero, where $V(H)$ is the set of nodes incident to some edges in $H$. This is because all edges in a spanning forest of $H$ can be used in the MST and there are less than $V(H)-1$ such edges if and only if $H$ is not connected.
\end{itemize}

Thus, we can verify these properties of $H$ by finding a minimum spanning tree of $G'$ using the $O(\sqrt{n}\log^*n+D)$-time algorithm of Kutten and Peleg~\cite{KuttenP98}.

\paragraph{Cut verification problem}
To verify if $H$ is a cut, we simply verify if $G$ after removing the edges in $H$, i.e. $H'=(V, E(G)\setminus E(H))$, is connected.



\paragraph{$s$-$t$ connectivity verification problem} We simply run Thurimella's algorithm (as explained above) and verify whether $s$ and $t$ are in the same connected component by verifying whether $\ell(s)=\ell(t)$.

\paragraph{Edge on all paths verification problem} Observe that $e$ lies on all paths between $u$ and $v$ if and only if $u$ and $v$ are disconnected in $H\setminus \{e\}$. Thus, we can use the $s$-$t$ connectivity verification algorithm above to check this.

\paragraph{$s$-$t$ cut verification problem}
To verify if $H$ is a $s$-$t$ cut, we simply verify $s$-$t$ connectivity of $G$ after removing the edges in $H$ (i.e.,  $H'=(V, E(G)\setminus E(H))$).

\paragraph{$e$-cycle verification problem}
To verify if $e$ is in some cycle of $H$, we simply verify $s$-$t$ connectivity of $H'=H\setminus \{e\}$ where $s$ and $t$ are the end nodes of $e$.

\paragraph{Bipartiteness verification problem} We run Thurimella's algorithm to find the connected components of $H$. We note that this algorithm can in fact output a rooted spanning tree of each connected component of $H$ and make each node knows its level in such a tree. This level implies a natural two-coloring of nodes in $H$. Now all nodes check if their neighbors have a color different from their own color. They will have a different color if and only if $H$ is bipartite.

\section{Conclusion}\label{sec:conclusion}

We initiate the systematic study of verification problems in the context of distributed network algorithms and present a uniform lower bound for several  problems.  We also show how these verification bounds can be used to obtain lower bounds on exact and approximation algorithms for many problems. Our techniques exploit well-known
bounds in communication complexity to show lower bounds in distributed computing. Our techniques give
a general and powerful methodology  for showing non-trivial lower bounds for various problems in distributed computing.

Several problems remain open. A general direction for extending all of this work is to study similar verification problems in special classes of graphs, e.g., a complete graph. A few specific open questions include proving better lower or upper bounds for the problems of shortest $s$-$t$ path, single-source distance computation, shortest path tree, $s$-$t$ cut, minimum cut. (Some of these problems were also asked in \cite{Elkin-sigact04}.) Also, showing randomized bounds for Hamiltonian path, spanning tree, and simple path verification remains open.


\bibliographystyle{siam}
\bibliography{distributed-verification}

%

\end{document}